\documentclass[11pt]{article}   %
\usepackage{fullpage}
\usepackage[font={sf}]{caption} %

\usepackage{epsfig}

\usepackage[linesnumbered,ruled,vlined]{algorithm2e}
\usepackage{color}
\usepackage{comment}

\usepackage[backend=biber,maxbibnames=99,doi=false,isbn=false,url=false,sorting=none]{biblatex}
\addbibresource{casting1.bib}
\providecommand{\keywords}[1]{\textbf{\textit{Keywords---}} #1}

\def\A{{\cal A}}
\def\C{{\cal C}}
\def\E{{\cal E}}

\def\M{{\cal M}}
\def\R{{\mathbb{R}}}
\def\S{{\cal S}}
\def\TF{{\cal T}}

\newcommand{\hibar}{\bar{h}_i}
\newcommand{\hjbar}{\bar{h}_j}

\RequirePackage{theorem,graphicx,amssymb,amsmath,fancyhdr,url}

\theorembodyfont{\slshape}

\newtheorem{theorem}{Theorem}
\newtheorem{lemma}[theorem]{Lemma}
\newtheorem{cor}[theorem]{Corollary}
\newtheorem{prop}[theorem]{Proposition}

\newtheorem{obs}{Observation}

\newtheorem{defini}{Definition}

\def\QED{\ensuremath{{\square}}}
\def\markatright#1{\leavevmode\unskip\nobreak\quad\hspace*{\fill}{#1}}
\newenvironment{proof}
{\begin{trivlist}\item[\hskip\labelsep{\bf Proof.}]}
 {\markatright{\QED}\end{trivlist}}

\title{Optimal Algorithms for Separating a Polyhedron from Its Single-Part Mold\thanks{This work has been supported in part by the Israel Science Foundation (grant no.~1736/19),
by NSF/US-Israel-BSF (grant no.~2019754),
by the Israel Ministry of Science and Technology (grant no.~103129),
by the Blavatnik Computer Science Research Fund,
by the Yandex Machine Learning Initiative at Tel Aviv University, and by the Natural Sciences and Engineering Research Council of Canada.}}

\author{Prosenjit Bose\thanks{School of Computer Science,
        Carleton University, {\tt jit@scs.carleton.ca}
}
\and

Tzvika Geft$^{\ddagger}$
\and
Dan Halperin\thanks{The Blavatnik School of Computer Science,
        Tel Aviv University, {\tt zvigreg@mail.tau.ac.il}, {\tt danha@post.tau.ac.il}, {\tt shasha94@gmail.com}.}
        \and
       Shahar Shamai$^{\ddagger}$
       }
\date{}

\begin{document}
\maketitle

\boldmath %
\begin{abstract}

Casting is a manufacturing process where liquid material is poured into a mold having the shape of a desired product. After the material solidifies, the product is removed from the mold. We study the case where the mold is made of a single part and the object to be produced is a three-dimensional polyhedron. Objects that can be produced this way are called castable with a single-part mold. A direction in which the object can be removed without breaking the mold is called a valid removal direction. We give an $O(n)$-time algorithm that decides whether a given polyhedron with $n$ facets is castable with a single-part mold. When possible, our algorithm provides an orientation of the polyhedron in the mold and a direction in which the product can be removed without breaking the mold.
Moreover, we provide an optimal $\Theta(n \log n)$-time algorithm to compute all valid removal directions for polyhdera that are castable with a single-part mold.
Both algorithms are an improvement by a linear factor over the previously best known algorithms for both of these problems.

\end{abstract}
\unboldmath

\keywords{Casting, Cast removal, Cast design, Separability}

\section{Introduction}

Casting is a widely-used manufacturing process, where liquid material is poured into a cavity inside a mold, which has the shape of a desired product. After the material solidifies, the product is taken out of the mold. Typically, a mold is used to manufacture numerous copies of a product, thus to ensure that a mold can be re-used the solidified product must be separated from its mold without breaking the object or the mold.

The problems that we study belong{} to the larger topic termed  \emph{Movable Separability of Sets}; see Toussaint~\cite{toussaint1985movable}. 
Problems in this area are often challenging from a combinatorial- and computational-geometry point of view (see, e.g., \cite{DBLP:journals/dcg/SnoeyinkS94}).
At the same time, solutions to these problems are needed in various application areas such as mold design~\cite{DBLP:journals/cad/AhnBBCHMS02}, assembly planning~\cite{DBLP:journals/algorithmica/HalperinLW00}, and 3D printing to mention a few.

In this paper we focus on a fairly basic movable-separability question. We are given a polyhedron $P$ in $\R^3$ with $n$ facets. No particular assumptions about the polyhedron are made beyond that it is a closed regular set, namely it does not have dangling edges or facets. The mold is box-shaped and the cavity has the shape of $P$ such that one of $P$'s facets is the top facet of the cavity. See 
Figure~\ref{fig:cast-examples} for an illustration in 2D. Once the top facet has been determined, we detect whether there is a direction in which the solidified object can be removed from the mold without colliding with the mold. Such a direction is a \emph{valid} removal direction and the corresponding top facet is \emph{valid}.
A polyhedron $P$ is \emph{castable with a single-part mold} if $P$ has at least one facet that can serve as a valid top facet.

\begin{figure}[htb]
\centering
\includegraphics[width=0.5\columnwidth]{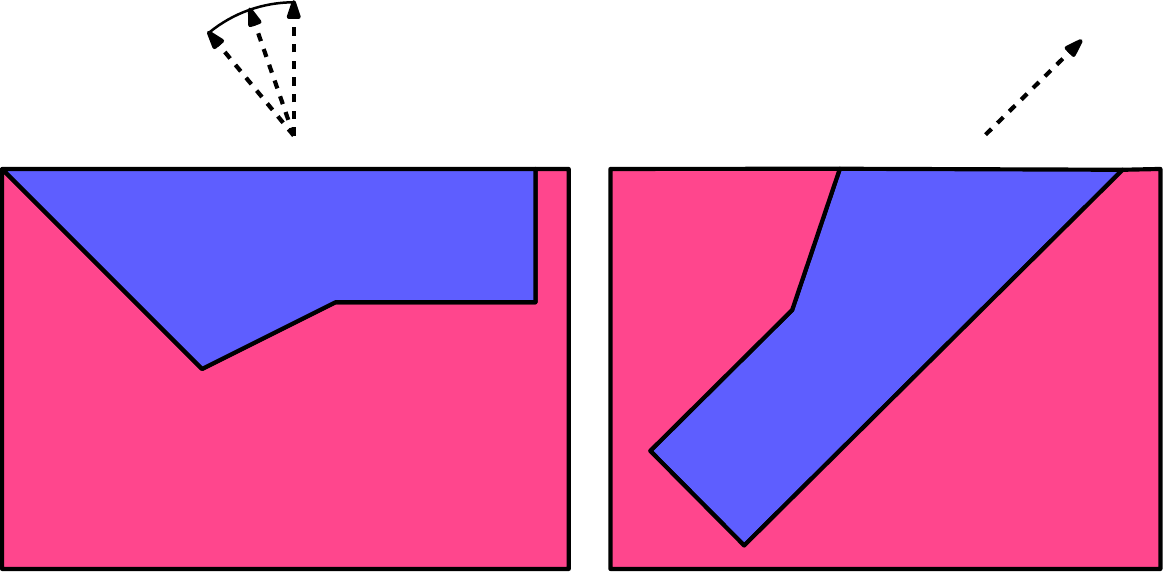}
\caption{
  Polygons (blue) in their molds (pink) and valid removal directions.
}
\label{fig:cast-examples}

\end{figure}

We address two problems:
\begin{description}

\item[All Facets Single Direction (AllFSD):] \ \\ 
Determine which facets of $P$ can serve as a valid top facet and for each such facet indicate \emph{one} valid removal direction.
\item[All Facets All Directions (AllFAD):]  \ \\ 
Same as above but for each  valid facet indicate \emph{all} the valid removal directions.
\end{description}

Why would anyone bother to solve AllFAD and not be satisfied with AllFSD? First, a solution is more stable if there is a continuum of directions rather than a single direction of separation. Second, we can use the availability of many possible directions to optimize other criteria.\\

\noindent
{\bf Previous results}
The previous best algorithms that we are aware of are based on \cite{DBLP:journals/algorithmica/AsbergBBGOTWZ97, DBLP:journals/cad/AhnBBCHMS02} and solve AllFAD in $O(n^2\log n)$ time and AllFSD in $O(n^2)$ time.
These algorithms are summarized in~\cite[Chapter~4]{Berg:2008:CGA:1370949}, which we refer to from now on.
They solve the AllFAD and AllFSD problems by solving the following two simpler problems $n$ times --- handling each candidate facet to be a top facet separately:
\begin{description}
\item[Single Facets Single Direction (SingleFSD):] \ \\ 
Determine whether a given facet $F_i$ of a polyhedron $P$ can serve as a valid top facet and if so, indicate \emph{one} direction in which $P$ can be removed from the mold with $F_i$ as its top facet.
\item[Single Facets All Directions (SingleFAD):]  \ \\ 
Same as above except indicate \emph{all} the directions in which $P$ can be removed from the mold with $F_i$ as its top facet. 
\end{description}
The existing algorithm for SingleFSD takes $O(n)$ time and the existing algorithm for SingleFAD takes $O(n\log n)$ time.
Determining whether the outer normal of a given top facet is a valid removal direction can be done in linear time~\cite{DBLP:journals/algorithmica/AsbergBBGOTWZ97}.
All the algorithms including the ones that we present use linear storage space.
Related work on computational aspects of manufacturing processes has been done for casting with a two-part mold~\cite{DBLP:journals/cad/AhnBBCHMS02, DBLP:journals/algorithmica/BoseBK97}, gravity casting, and stereolithography~\cite{DBLP:journals/algorithmica/AsbergBBGOTWZ97}; see also the survey~\cite{DBLP:journals/cg/BoseT94}.
\\

\noindent
{\bf Contribution}
Our contribution in this paper is an $O(n)$-time algorithm for the AllFSD problem and an $O(n\log n)$-time algorithm for the AllFAD problem.
Both our algorithms are optimal.
We also present an $O(n)$-time solution to the AllFAD problem when the input polyhedron is convex.
Additionally, we prove that for any polyhedron there are at most six valid top facets.
The efficiency of our solution makes it a good candidate for implementation,
which will enable product designers and engineers to quickly verify the castability of their design.

A previous version of this paper~\cite{DBLP:conf/case/BoseHS17} was presented at the IEEE 13th International Conference on Automation Science and Engineering (CASE 2017).
In this paper we extend our previous work by presenting an $\Omega(n \log n)$ lower bound for the AllFAD problem, proving that our $O(n \log n)$-time algorithm is optimal.

\section{Preliminary Analysis}
\label{sect:Preliminary-Analysis}

Instead of considering all of the top facets as possible candidates and running a separate algorithm for each of them, as in~\cite[Chapter~4]{Berg:2008:CGA:1370949}, we start by finding a small set of up to 12 candidate top facets. Then we solve SingleFSD/SingleFAD~\cite[Chapter~4]{Berg:2008:CGA:1370949} for each of the candidates.

In order to find the possible top facets we consider an arrangement of great circles on the unit sphere $\S^2$.  An arrangement of curves on the sphere is a subdivision of the sphere into vertices, edges, and faces as induced by the given curves: Vertices are the intersection points of the curves, edges are the maximal portions of a curve not intersected by any other curve, and faces are the maximal portions of the sphere that are not intersected by any curve; see, e.g., \cite{DBLP:journals/mics/BerberichFHKS10,DBLP:journals/mics/BerberichFHMW10}.  Each point $p$ on $\S^2$ represents a direction in $\R^3$---the direction of the vector from the center of $\S^2$ to $p$. We will use the terms \emph{points} and \emph{directions} on $\S^2$ interchangeably.

Let $F_1,\ldots,F_n$ be the facets of the given polyhedron~$P$. Let $\nu(F_i)$ be the normal to the facet $F_i$ pointing into the polyhedron. 
 
We describe an orientation of a mold by the pair
$(F_i, \vec{d})$, which should be interpreted as follows. The top facet
of the mold is $F_i$. To achieve this, the polyhedron needs to
be rotated by a rotation matrix $R_i$ such that $F_i$ becomes the top facet. 
We apply the same rotation matrix $R_i$ to  $\vec{d}$ to obtain a removal direction
$\vec{d_0}:=R_i\vec{d}$. (In other words, for convenience during our analysis, we use 
the vector $\vec{d}$ relative to the given, original
orientation of the polyhedron. Once we determine that the pair $(F_i, \vec{d})$ is valid we still need to rotate our polyhedron such that $F_i$ becomes the top facet and $\vec{d_0}$ is a valid removal direction in the new orientation.)

\begin{obs}
\label{obs:valid-pair2}
The pair $(F_i,\vec{d})$ is a valid orientation of a mold and removal direction if and only if for each point $p$ in the polyhedron, the ray that starts at $p$ with direction $\vec{d}$:

(i) intersects $F_i$, and 

(ii) $\forall j\neq i,$ does not intersects $F_j$. (It may partialy overlap\footnote{This corresponds to allowing $P$ to be moved out while sliding in contact with $F_j$. } $F_j$.)   
\end{obs}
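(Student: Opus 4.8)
The plan is to fix a concrete model of the mold, translate \emph{castability with $F_i$ on top and pull direction $\vec{d}$} into a statement about the translates $P+t\vec{d}$, and then prove the two implications by contradiction, pinning any collision to a single crossing of a facet. For the model, let $\Pi$ be the plane containing $F_i$ and let $H$ be the closed halfspace bounded by $\Pi$ on the side of $P$; since $F_i$ being the top facet of the cavity presupposes that $\Pi$ supports $P$, we have $P\subseteq H$ and in fact $\mathrm{int}(P)\subseteq\mathrm{int}(H)$. I will treat $H$ itself as ``the box'', which is harmless since one may enlarge a genuine box until $P$ and every translate arising during the pull-out meet it only across $\Pi$. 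The mold is then the closed region $M:=H\setminus\mathrm{int}(P)$, and the pair $(F_i,\vec{d})$ is valid exactly when $(P+t\vec{d})\cap\mathrm{int}(M)=\emptyset$ for all $t\ge 0$ and $P+t\vec{d}$ is disjoint from $H$ for all large enough $t$. The one local fact I record up front: for $j\ne i$ and any $q$ in the relative interior of $F_j$, a small enough ball about $q$ meets $\mathrm{int}(M)$ on the side of the plane of $F_j$ opposite to $\mathrm{int}(P)$ --- because $F_j\not\subseteq\Pi$, that side lies locally in $\mathrm{int}(H)\setminus P=\mathrm{int}(M)$.

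For ``valid $\Rightarrow$ (i)\,\&\,(ii)'', take first $p\in\mathrm{int}(P)$ and let $\tau>0$ be the first parameter at which the ray $p+t\vec{d}$ leaves $\mathrm{int}(P)$, so $q:=p+\tau\vec{d}$ lies on some facet $F_m$. Just past $\tau$ the ray is either again in $\mathrm{int}(P)$ (a grazing contact) or in $\R^3\setminus P$; in the latter case, if $m\ne i$ then the local fact puts the ray in $\mathrm{int}(M)$ at a positive time, contradicting validity, so $m=i$: the ray leaves $\mathrm{int}(P)$ through $F_i$ and, $H$ being convex, stays outside $H$ thereafter. Hence the ray meets $F_i$, which is (i); and since it lies in $\mathrm{int}(P)$ for $t<\tau$ and outside $H\supseteq\bigcup_j F_j$ for $t>\tau$, it never transversally crosses an $F_j$ with $j\ne i$, which is (ii) once ``intersect'' is read as ``cross'' in accordance with the parenthetical exception for overlaps. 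A point $p\in\partial P$ is treated by the same case analysis.

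For the converse, assume (i) and (ii) hold for every $p\in P$ and suppose some $p\in P$ and $t_0>0$ satisfy $p+t_0\vec{d}\in\mathrm{int}(M)$. Put $t_1:=\sup\{t\in[0,t_0]:p+t\vec{d}\in P\}$; since $P$ is closed, $q:=p+t_1\vec{d}\in\partial P$ lies on some facet $F_k$, and $p+t\vec{d}\notin P$ for $t\in(t_1,t_0]$. If $k=i$, then just past $t_1$ the ray has left $H$ across $\Pi$, and by convexity of $H$ it never returns, contradicting $p+t_0\vec{d}\in\mathrm{int}(M)\subseteq H$. Hence $k\ne i$, so the ray from $p$ meets $F_k$, which by (ii) can only be an overlap --- a contact I would exclude by a finer look at $q$: lying in the plane of $F_k$ just past $t_1$ but forced into $\mathrm{int}(M)$, the ray would then have to cross the relative interior of a facet adjacent to $F_k$ and distinct from $F_i$, again violating (ii). So the motion is collision-free; and since (i) makes $\vec{d}\cdot\nu(F_i)<0$ (up to tangencies along $\partial F_i$), $P+t\vec{d}$ leaves $H$ for all large $t$. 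Thus $(F_i,\vec{d})$ is valid.

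The step I expect to be the main obstacle is the bookkeeping at degenerate contacts --- rays that graze an edge or vertex of $P$, or slide within the supporting plane of a facet. The parenthetical ``(It may partially overlap $F_j$)'' is precisely what legalises such sliding contacts, so the argument must consistently distinguish a transversal crossing of $\mathrm{relint}(F_j)$ (forbidden) from a coplanar overlap (allowed), and must show, from the local structure of $M$ near the lower-dimensional faces of $P$, that an admissible contact can never be the first instant of an escape into $\mathrm{int}(M)$. Everything else rests on two soft facts: $P$ is bounded, so every ray from a point of $P$ eventually leaves $P$; and $H$ is convex, so a ray that leaves $H$ never comes back.
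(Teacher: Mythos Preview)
The paper offers no proof of this statement: it is presented as an \emph{Observation}, essentially a restatement of the definition of removability from a single-part mold, and is taken as self-evident. So there is nothing to compare your argument against; you have formalised what the authors were content to leave to intuition.

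Your outline is sound, and you correctly identify the one genuinely delicate issue (grazing contacts at edges and vertices, and rays sliding inside the supporting plane of a facet). One point you should tighten: in setting up the model you assume that the plane $\Pi$ of $F_i$ supports $P$. For the forward implication this is part of the hypothesis ``valid mold'', but for the converse it is part of the \emph{conclusion} and must be derived from (i) and (ii), not presupposed. The fix is easy --- condition (i) applied to interior points of $P$ near $F_i$ forces $\vec d\cdot\nu(F_i)<0$, and then any point of $P$ strictly on the far side of $\Pi$ would emit a ray moving away from $\Pi$ and never meeting $F_i$, contradicting (i) --- but it should be stated rather than folded into the setup. A related wrinkle your ``local fact'' should address is the possibility that some $F_j$ with $j\ne i$ is coplanar with $F_i$ (this can happen for non-convex $P$); exiting through such an $F_j$ lands the ray outside $H$ rather than in $\mathrm{int}(M)$, so the contradiction there has to come from~(i) directly, not from a collision with the mold.

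Beyond these details and the degenerate-contact bookkeeping you already flag, the argument is correct: tracking the last exit of the ray from $P$, invoking the local structure of $\mathrm{int}(M)$ near the relative interior of a non-top facet to manufacture a collision, and using convexity of $H$ to rule out re-entry after crossing $F_i$ are exactly the right moves.
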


\begin{lemma}
\label{lem:valid-pair}
The pair $(F_i,\vec{d})$ represents a valid orientation of a mold and removal direction if and only if

(i) $\vec{d}\cdot\nu(F_i)<0$, and  

(ii) $\forall j\neq i,\;\;  
\vec{d}\cdot\nu(F_j)\geq 0$
\end{lemma}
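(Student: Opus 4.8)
The plan is to deduce the lemma from Observation~\ref{obs:valid-pair2}, rewriting its ray-intersection conditions as the stated half-space conditions. Write $H_i$ for the plane supporting $F_i$ and $H_i^+$ for the closed half-space bounded by $H_i$ into which the inner normal $\nu(F_i)$ points; along any ray $t\mapsto p+t\vec d$ the function $t\mapsto (p+t\vec d)\cdot\nu(F_i)$ is affine with slope $\vec d\cdot\nu(F_i)$, so such a ray meets $H_i$ at most once. The picture behind the statement is that the sign of $\vec d\cdot\nu(F_j)$ marks $F_j$ as front-facing ($<0$), grazing ($=0$), or back-facing ($>0$) for the pull-out direction, and that a ray in direction $\vec d$ can cross the boundary of the solid $P$ outward only through a front-facing facet; conditions (i)--(ii) say exactly that $F_i$ is the unique front-facing facet.

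For the ``only if'' direction, assume $(F_i,\vec d)$ is valid. For (i), take a point $p$ in the relative interior of $F_i$ and push it slightly inward to $p'=p+\eps\,\nu(F_i)\in\mathrm{int}(P)$; since $p\in H_i$ and $(p'-p)\cdot\nu(F_i)>0$, the point $p'$ lies strictly on the $\nu(F_i)$ side of $H_i$. By Observation~\ref{obs:valid-pair2}(i) the ray from $p'$ meets $F_i\subset H_i$, and an affine function of $t$ that is strictly positive at $t=0$ can reach the value $0$ for some $t\ge 0$ only if its slope is negative, so $\vec d\cdot\nu(F_i)<0$. For (ii), suppose $\vec d\cdot\nu(F_j)<0$ for some $j\neq i$; choosing $q$ in the relative interior of $F_j$ and setting $p=q-\eps\vec d$ for small $\eps>0$ places $p$ strictly inside $P$ (it lies on the $\nu(F_j)$ side of $F_j$'s plane, close to a relative-interior point of $F_j$), and the ray from $p$ crosses $F_j$ transversally at $q$, contradicting Observation~\ref{obs:valid-pair2}(ii).

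For the ``if'' direction, assume (i)--(ii). I would first establish the auxiliary fact $P\subseteq H_i^+$: otherwise the linear functional $x\mapsto x\cdot\nu(F_i)$ attains its minimum over the compact set $P$ at a boundary point $v\notin H_i$, this minimizer cannot be a reflex feature (near it $P$ lies in the half-space $\{x:(x-v)\cdot\nu(F_i)\ge 0\}$, which a reflex vertex or edge would overflow), so the tangent cone of $P$ at $v$ sits inside that half-space, which by duality writes $\nu(F_i)$ as a nonnegative combination of the inner normals of the facets incident to $v$, none of them $F_i$; taking the inner product with $\vec d$ and using (ii) gives $\vec d\cdot\nu(F_i)\ge 0$, contradicting (i). Granting $P\subseteq H_i^+$, fix any $p\in P$; since $P$ is bounded the ray $\rho_p(t)=p+t\vec d$ must leave it, and at any transversal outward crossing of $\partial P$ through the relative interior of a facet $F_k$ one has $\vec d\cdot\nu(F_k)<0$, hence $F_k=F_i$ by (ii); and past such a crossing, as $\vec d\cdot\nu(F_i)<0$ and the crossing point lies in $H_i$, the ray has left $H_i^+\supseteq P$ and never returns. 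So the ray meets $P$ in a single segment with far endpoint on $F_i$ --- this is Observation~\ref{obs:valid-pair2}(i) --- and it makes no transversal crossing of any other facet --- this is Observation~\ref{obs:valid-pair2}(ii).

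The delicate step is making rigorous ``transversal outward crossing through the relative interior of a facet'': rays that pass through edges or vertices of $P$, that start on a facet, or that graze a reflex feature, together with the exact local geometry of $\partial P$ there. I would handle these by extending the half-space argument used above for $P\subseteq H_i^+$ --- at a convex edge or vertex, leaving $P$ means leaving the intersection of the inner half-spaces of the incident facets, so one incident facet $F_k$ has $\vec d\cdot\nu(F_k)<0$ and hence equals $F_i$ by (ii), while a reflex feature is never an outward crossing point --- or else by a limiting argument: conditions (i)--(ii) form a conjunction of one strict and several non-strict linear inequalities, so they survive small perturbations of $p$, which reduces the analysis to rays in general position.
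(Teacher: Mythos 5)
The paper's own ``proof'' of this lemma is a one-line citation of Lemma~4.1 in de~Berg et al.\ plus the remark that conditions (i)--(ii) are rotation-invariant; you instead derive the lemma from scratch out of Observation~\ref{obs:valid-pair2}. Your overall architecture is sound and matches the textbook argument in spirit: the ``only if'' direction (push a point of $F_i$ slightly inward to force $\vec d\cdot\nu(F_i)<0$; place a point just behind a hypothetical front-facing $F_j$ to contradict the no-intersection condition) is correct, and the ``if'' direction via ``every generic outward crossing is through a front-facing facet, which by (ii) must be $F_i$, and the ray crosses the plane of $F_i$ at most once'' is the right mechanism. Since you argue directly from Observation~\ref{obs:valid-pair2}, which is already stated in the original orientation, you also don't need the paper's rotation-invariance remark.

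One sub-step as written would fail. In your proof that $P\subseteq H_i^+$ you assert that the minimizer $v$ of $x\mapsto x\cdot\nu(F_i)$ over $P$ ``cannot be a reflex feature'' and then invoke cone duality to write $\nu(F_i)$ as a nonnegative combination of the inner normals of the facets incident to $v$. The first assertion is false for vertices: a non-convex vertex can minimize a linear functional (e.g.\ the apex of a pyramid over a plus-sign-shaped base minimizes the height function, yet its tangent cone is the non-convex cone over the plus sign, comfortably contained in a closed half-space). For such a vertex the tangent cone is not the intersection of the incident facets' inner half-spaces, so the polar-duality step (which needs $C^{*}=\mathrm{cone}\{\nu(F_k)\}$) is not justified; the dual of a non-convex cone is the dual of its convex hull, whose facet normals need not come from the incident facets. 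The claim $P\subseteq H_i^+$ is nevertheless true under (i)--(ii), and you can prove it with the very ray argument you use two sentences later: a generic interior point $p$ with $p\cdot\nu(F_i)<c$ would have its ray exit $P$ through $F_i$ at some $q=p+t\vec d$ with $t\ge 0$, whence $q\cdot\nu(F_i)\le p\cdot\nu(F_i)<c$, contradicting $q\in H_i$. With that substitution, and accepting your (explicitly flagged) perturbation treatment of rays through edges and vertices --- which is no less rigorous than the paper's outsourcing of the entire statement --- the proof goes through.
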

\begin{proof}
For the case where $F_i$ is the top facet of $P$, this fact is proved in Lemma~4.1 in \cite{Berg:2008:CGA:1370949}. It remains to notice that conditions~(i) and~(ii) of Observation~\ref{obs:valid-pair2} are invariant under rotation. They hold in $P$'s given orientation if and only if they hold when $P$ is rotated such that $F_i$ becomes the top facet.
\end{proof}

\begin{defini}
\label{def:valid-pair}

A \emph{valid pair} is a pair $(F_i,\vec{d})$ that obeys the conditions of Lemma~\ref{lem:valid-pair}.
\end{defini} 
\begin{defini}
A facet $F_i$ will be called a \emph{valid top facet} if there exists a vector $\vec{d}$ for which the pair $(F_i,\vec{d})$ is a \emph{valid pair}.
\end{defini} 

Denote by $h_i$ := $h(F_i)$ the closed hemisphere of directions $\vec{d}$ on $\S^2$ for which $\vec{d}\cdot\nu(F_i)\geq 0$, and by $\bar{h_i}$ or $\bar{h}(F_i)$ the open complement hemisphere. For a set of facets X we denote by $\bar{H}(X)$ the set $\{\bar{h}(F_i)|F_i \in X\}$. Let $\bar{H}=\bar{H}(\{F_1,\ldots,F_n\})=\{\bar{h}_1,\ldots,\bar{h}_n\}$. Let $c_i$ denote the boundary great circle of $h_i$, and let $\C=\{c_1,c_2,\ldots,c_n\}$. Consider the arrangement $\A(\C)$ on $\S^2$, namely the subdivision of $\S^2$ induced by the great circles in $\C$.
See Figure~\ref{fig:PolyToSphere} for an illustration.
\begin{figure}[htb]
\centering

\includegraphics[width=0.45\columnwidth]{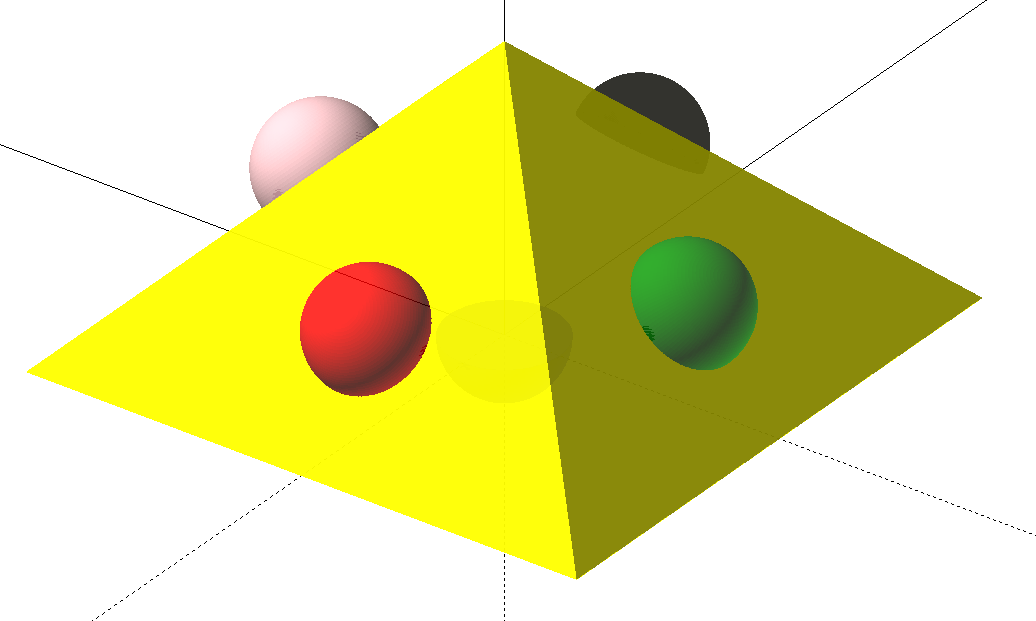}
\hspace{0.08\columnwidth}
\includegraphics[height=2.5cm]{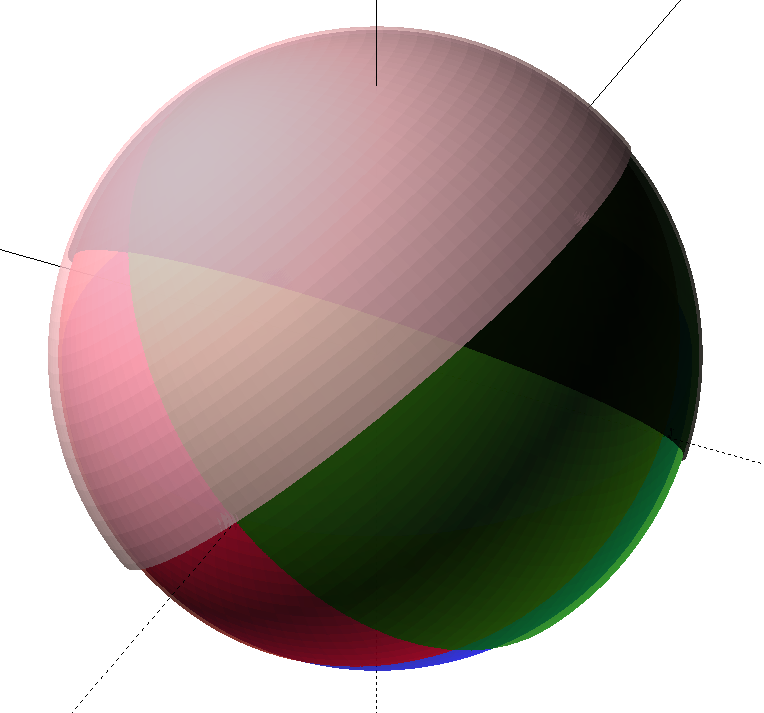}
\caption{ %
A pyramid with a hemisphere of directions $\bar{h}(F_i)$ on each of its facets (left-hand side). And $\bar{H}$, the placement of these hemisphere on $\S^2$ (right-hand side).
}
\label{fig:PolyToSphere}

\end{figure}
\begin{defini}
\label{defi:depth}

The \emph{depth} of a point $p$ on $\S^2$ is the number of hemispheres in $\bar{H}$ containing $p$.
\end{defini}    
 
\begin{obs}
All the points in any fixed cell of the arrangement $\A(\C)$ have the same depth.
\end{obs}
\begin{theorem}
\label{theo:zero-cells}
There cannot be a face of zero depth in $\A(\C)$.
\end{theorem}
\begin{proof}
Assume, for the sake of contradiction, that there exists a face of zero depth in the arrangement $\A(\C)$.
Let $\vec{d}$ be a direction in this face.
If we choose an arbitrary point inside the polyhedron (not on its boundary) and go from it in the direction $\vec{d}$ we will eventually leave the polyhedron through some facet $F_j$.
That is, $\vec{d}$ must be pointing out of the polyhedron from $F_j$, an hence $\vec{d}$ is in $\bar{h}(F_j)$, in contradiction to $\vec{d}$ lying inside a face with zero depth.

Notice that in order to avoid special degenerate cases, like crossing the polyhedron in an edge, we choose the arbitrary point such that it does not lie on any of the planes spanned by the direction $\vec{d}$ and an edge of the polyhedron.

\end{proof}

\begin{theorem}
\label{theo:valid-facet}
The polyhedron $P$ is castable with a single-part mold if and only if the arrangement $\A(\C)$ contains a point of depth~$1$.
A cell $\xi$ of depth~$1$ in $\A(\C)$, which is covered by the hemisphere
$\hibar$, represents a mold whose top facet is $F_i$ and each point $\vec{d}$ in $\xi$ represents a valid removal direction $R_i\vec{d}$, where $R_i$ is the orthonormal matrix that rotates $\nu(F_i)$ to point vertically down (in the negative $z$ direction). 
\end{theorem}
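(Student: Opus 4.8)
The plan is to reduce the statement to the definitions already in place, using Lemma~\ref{lem:valid-pair}, Theorem~\ref{theo:zero-cells}, and the observation that depth is constant on each cell of $\A(\C)$. First I would translate the notion of a valid pair into the language of hemispheres on $\S^2$. By Lemma~\ref{lem:valid-pair}, the pair $(F_i,\vec{d})$ is valid iff $\vec{d}\cdot\nu(F_i)<0$ and $\vec{d}\cdot\nu(F_j)\ge 0$ for all $j\ne i$; in hemisphere terms this says precisely that $\vec{d}\in\bar{h}_i$ and $\vec{d}\in h_j$ for every $j\ne i$, i.e.\ $\vec{d}$ lies in exactly one of the open hemispheres of $\bar{H}$, namely $\bar{h}_i$. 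In other words, $\vec{d}$ has depth exactly $1$ and the unique hemisphere covering it is $\bar{h}_i$. Conversely, any $\vec{d}$ of depth $1$ lying in $\bar{h}_i$ satisfies $\vec{d}\cdot\nu(F_i)<0$ and $\vec{d}\cdot\nu(F_j)\ge 0$ for all $j\ne i$ (here Condition~(i) could even be invoked via Theorem~\ref{theo:zero-cells}, but it follows directly from $\vec{d}\in\bar h_i$), so $(F_i,\vec{d})$ is a valid pair.

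Next I would assemble the two directions of the biconditional. If $P$ is castable, then by Definition~\ref{def:valid-pair} and the definition of a valid top facet there exist $F_i$ and $\vec{d}$ forming a valid pair, so by the previous paragraph $\vec{d}$ has depth $1$; since depth is constant on cells (the Observation following Definition~\ref{defi:depth}), the whole cell $\xi$ containing $\vec{d}$ has depth $1$ and is covered exactly by $\bar{h}_i$, giving a point (indeed a cell) of depth $1$ in $\A(\C)$. For the reverse direction, if $\A(\C)$ contains a point of depth $1$, it lies in some cell $\xi$ of depth $1$, covered by a unique hemisphere $\bar{h}_i\in\bar{H}$; then every $\vec{d}\in\xi$ forms a valid pair with $F_i$, so $F_i$ is a valid top facet and $P$ is castable.

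It remains to address the last sentence about the rotation $R_i$. This is essentially bookkeeping that was already set up in Section~\ref{sect:Preliminary-Analysis}: the pair $(F_i,\vec{d})$ describes the mold in the original orientation of $P$, and by construction the actual mold is obtained by applying the orthonormal matrix $R_i$ that sends $\nu(F_i)$ to the downward vertical, so that $F_i$ becomes the top facet; applying the same $R_i$ to $\vec{d}$ yields the pull-out direction $R_i\vec{d}$ in the rotated frame. Since Conditions~(i) and~(ii) of Lemma~\ref{lem:valid-pair} are rotation-invariant (as noted in the proof of that lemma), $(F_i,\vec d)$ being a valid pair is equivalent to $R_i\vec d$ being a valid pull-out direction for the rotated polyhedron with top facet $F_i$, which is exactly what the theorem asserts.

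I do not expect a genuine obstacle here: the theorem is a repackaging of Lemma~\ref{lem:valid-pair} in the arrangement/depth vocabulary. The one point that needs a little care is the word ``exactly'' hidden in ``depth~$1$'' — I must make sure that a depth-$1$ point lies in the right hemisphere $\bar{h}_i$ and in no other, so that the identification of the top facet from the covering hemisphere is unambiguous — but this is immediate from the definition of depth as the count of hemispheres of $\bar{H}$ containing the point, together with the fact that the hemispheres of $\bar H$ are indexed by the facets $F_i$.
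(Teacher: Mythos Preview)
Your proposal is correct and follows essentially the same route as the paper's proof: both arguments translate Lemma~\ref{lem:valid-pair} into the hemisphere/depth language, identifying valid pairs $(F_i,\vec d)$ with points of depth exactly~$1$ covered by $\bar h_i$, and invoke Theorem~\ref{theo:zero-cells} to exclude depth~$0$. The only cosmetic difference is that the paper argues the non-castability of depth-$\ge 2$ points by explicitly naming the index set $J$ of covering hemispheres and observing that choosing any $F_j$, $j\in J$, as top facet violates Condition~(ii) for the others, whereas you obtain the same conclusion directly from the equivalence ``valid pair $\Leftrightarrow$ depth exactly~$1$''.
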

\begin{proof}
Let $\xi$ be a cell of depth~$1$ in $\A(\C)$ covered by $\hibar$, and let $\vec{d}$ be a point in $\xi$. We establish that $(F_i,\vec{d})$ is a valid pair by verifying that the conditions of Lemma~\ref{lem:valid-pair} above hold for it.

It remains to show that no point in any cell of different depth can represent a valid removal direction for any top facet. Consider a cell $\psi$ of depth greater than~$1$ and a point $\vec{d}$ in it. Let $J$ be the index set of the hemispheres $\hibar$ that cover $\psi$: 
$J=\{i|\vec{d}\in\hibar\}$. One of the facets $F_j$, for $j\in J$, must serve as the top facet for Condition~(i) of Lemma~\ref{lem:valid-pair} to hold. But then for each of the remaining facets $F_k$, for $k\in J, k\neq j$, Condition~(ii) of the lemma is violated. Finally, as shown in Theorem~\ref{theo:zero-cells}, no cell can have depth zero.
\end{proof}

\medskip

Let $\TF\subseteq\{F_1,\ldots,F_n\}$ be the set of all \emph{valid top facets}.
Our goal is to find $\TF$. %
In Section~\ref{sect:Finding A Covering-Set} we give an algorithm that finds a set of up to 12 facets that contains $\TF$ and show that $|\TF| \leq 6$.

\begin{defini}
A \emph{covering set} is a set of open hemispheres $S\subseteq \bar{H}$ such that the union of all the hemispheres in $S$ covers the entire unit sphere.
\end{defini} 

\begin{theorem}
\label{theo:covering set-top-facets}

For each covering set $S$, $\bar{H}(\TF)\subseteq S$.
\end{theorem}
\begin{proof}
Let $S \subseteq \bar{H}$ be a covering set. Assume, for the sake of contradiction that there exists some facet $F_i \in \TF$ for which $\hibar \notin S$. By Theorem~\ref{theo:valid-facet}, a facet $F_i$ is a \emph{valid top facet} if and only if there exists a depth~$1$ cell $\xi$ in $\A(\C)$ which is covered by $\hibar$ (and only by $\hibar$). We know that each point in the unit sphere is covered by some hemisphere $h \in S$, therefore some hemisphere $\hjbar \in S$ covers a point in $\xi$. By the definition of a cell, if some point in the cell is contained in $\hjbar$ then the entire cell is contained in $\hjbar$. We also assumed that $\hibar$ covers $\xi$, this means that $\xi$ is of depth at least two, in contradiction to $F_i$ being a valid top facet.
\end{proof}

Our next step is to give an algorithm that finds, in linear time, a set of 12 open hemispheres in $\bar{H}$  whose union covers the entire unit sphere.

\section{Finding A Covering-Set}
\label{sect:Finding A Covering-Set}

In order to find a \emph{covering set} for $\S^2$, we start by finding a covering set for the upper open hemisphere.
In order to do so, we will define and prove the following claims.

\begin{defini}
\label{defi:half-plane}

The half-plane of a hemisphere $h$ with respect to the upper hemisphere is the central projection of the intersection of $h$ and the upper hemisphere onto the plane $z = 1$~\cite[Chapter~4]{Berg:2008:CGA:1370949}. See Figure~\ref{fig:proj}.
\end{defini}
\begin{figure}[htb]
\centering
\includegraphics[width=0.45\columnwidth]
{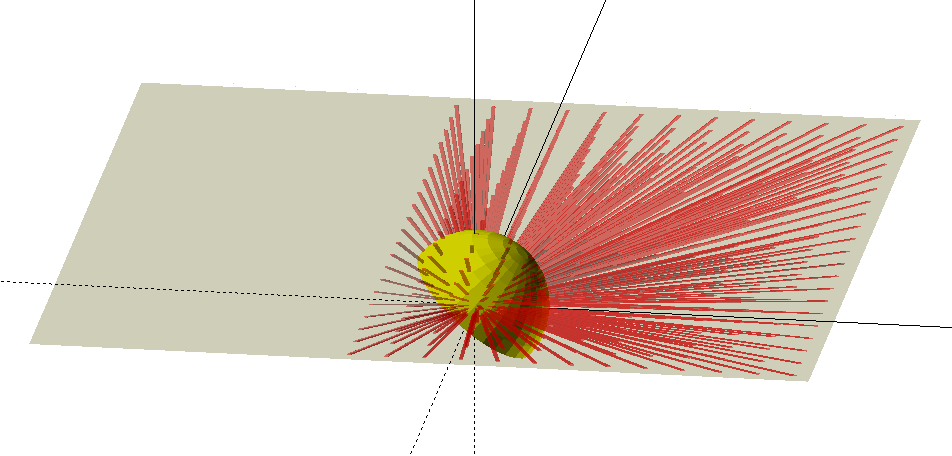}
\includegraphics[width=0.45\columnwidth]
{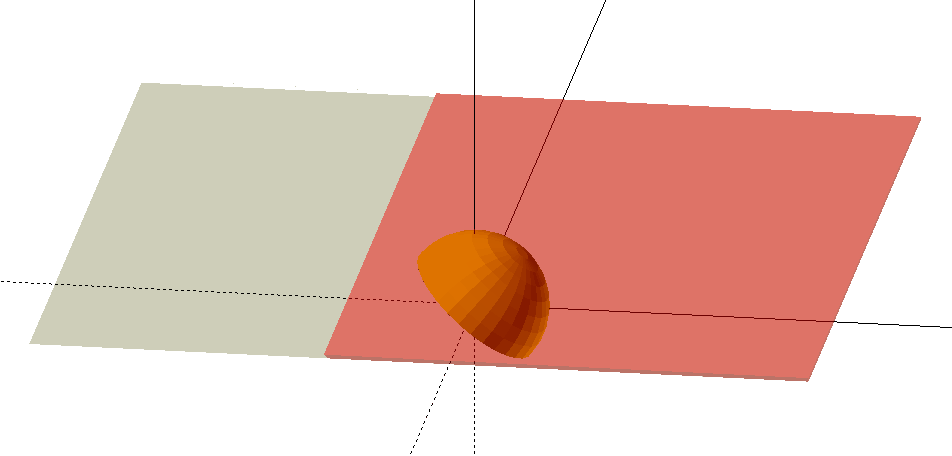}
\caption{A visualization of a central projection of an arbitrary hemisphere (yellow) onto the plane $z = 1$ where it becomes a half-plane (red).
For clarity of illustration we show a sample of the projection on the left and the full half-plane on the right.
 \label{fig:proj}
}
\end{figure}

\noindent
{\bf Remark.}
The ``half-plane'' of the lower hemisphere  with respect to the upper hemisphere is empty and the ``half-plane'' of the upper hemisphere with respect to the upper hemisphere is the entire plane $z = 1$.

\begin{prop}
\label{lem:helly}
If the union of a set of half-planes, $B$,  is the entire plane then there exists a set $S \subseteq B$ such that the union of the half-planes in $S$ is the entire plane and $|S| = 3$.
The set $S$ can be computed in linear time in the size of $B$.
\end{prop}
\begin{proof}
Helly's theorem \cite{danzer1921helly} states that given a collection $X_1,X_2,...,X_n$ of convex subsets of ${\R}^d$ where $n>d$, if the intersection of any $d+1$  objects in this collection is nonempty, then the entire collection has a nonempty intersection.
The contrapositive of this theorem is that if the intersection of the entire collection is empty then there exists a subset of size $d+1$ such that its intersection is empty.\\
In our case $d=2$, \\
$\bigcup\limits_{b \in B} b = {\R}^2 \Rightarrow	\overline{\bigcup\limits_{b \in B} b } =   \emptyset \Rightarrow	 \bigcap\limits_{b \in B} \overline{b} = \emptyset$.
By Helly's theorem there exist $ \overline{b_1} \cap \overline{b_2} \cap \overline{b_3}  = \emptyset$ in $B$.
By that we learn that  $b_1 \cup b_2 \cup b_3 = {\R}^2$.
This subset can be computed in linear time in the size of $B$ with linear space using a version \cite{norton1992using} of Megiddo's LP algorithm  \cite{Megiddo:1984:LPL:2422.322418}, or in linear expected time and constant space using Seidel's randomized incremental algorithm  \cite{Seidel1991}. 
\end{proof}

We can now find a covering set  $S \subseteq \bar{H}$ of size three for
the upper open hemisphere in linear time, as follows. If
some hemisphere in $\bar{H}$ is the upper open hemisphere return it and
stop; we are done. Otherwise, transform each
hemisphere in $\bar{H}$ into a half-plane on $z = 1$, as described
earlier. Then, use the procedure we have just described to find
three half-planes that cover the entire plane $z = 1$ in linear
time.
A covering set for any open hemisphere in $\S^2$ can be
found similarly.
In summary,

\begin{obs}
\label{obs:hemisphere-covering}
Any open hemisphere in $\S^2$ is covered by up to three hemispheres in $\bar{H}$.
\end{obs}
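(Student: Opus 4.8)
The plan is to reduce the claim to the special case of the \emph{upper} open hemisphere, which we have essentially already handled when building a covering set for it via Lemma~\ref{lem:helly}. First note that by Theorem~\ref{theo:zero-cells} the arrangement $\A(\C)$ has no face of zero depth, i.e.\ every point of $\S^2$ lies in some hemisphere of $\bar{H}$, so $\bigcup\bar{H}=\S^2$ and in particular every open hemisphere is covered by $\bar{H}$. Now fix an arbitrary open hemisphere $h$ and choose a rotation $\rho$ of $\S^2$ that maps $h$ to the upper open hemisphere $U$. Rotations send great circles to great circles and open hemispheres to open hemispheres, so $\rho(\bar{H})=\{\rho(\bar{h}_1),\ldots,\rho(\bar{h}_n)\}$ is again a family of open hemispheres, and it covers $U$ precisely because $\bar{H}$ covers $h$.

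Next I would replay, on the rotated family, the construction used earlier for the upper hemisphere. If some $\rho(\bar{h}_i)$ equals $U$, then $\bar{h}_i=\rho^{-1}(U)=h$ and a single hemisphere already covers $h$. Otherwise, transform each $\rho(\bar{h}_i)$ into its half-plane on the plane $z=1$ as in Definition~\ref{defi:half-plane}; the lower open hemisphere, should it occur, contributes the empty half-plane and is discarded. Since central projection from $U$ onto $z=1$ is a bijection under which a direction of $U$ lies in a hemisphere iff its image lies in that hemisphere's half-plane, the covering of $U$ by the $\rho(\bar{h}_i)$ becomes a covering of the entire plane $z=1$ by the corresponding half-planes. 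Lemma~\ref{lem:helly} then produces three half-planes whose union is all of $z=1$, hence three hemispheres $\rho(\bar{h}_{i_1}),\rho(\bar{h}_{i_2}),\rho(\bar{h}_{i_3})$ whose union contains $U$; applying $\rho^{-1}$ shows that $\bar{h}_{i_1},\bar{h}_{i_2},\bar{h}_{i_3}$ cover $h$.

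The points that need care are the degenerate cases of the half-plane transformation --- a hemisphere coinciding with $U$ (handled separately above) and the lower open hemisphere projecting to the empty set --- together with the verification, essentially contained in Definition~\ref{defi:half-plane} and the standard central-projection duality, that the projection is a containment-preserving bijection between the points of $U$ and the points of $z=1$. I do not anticipate a genuine obstacle: the mathematical content lives entirely in Lemma~\ref{lem:helly}, and this observation is simply its rotation-invariant restatement.
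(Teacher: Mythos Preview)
Your proposal is correct and matches the paper's approach: the paper does not give a standalone proof of this observation but presents it as an immediate consequence of the preceding paragraph, which constructs a size-three cover of the upper open hemisphere via central projection to $z=1$ and Lemma~\ref{lem:helly}. You have simply made explicit the rotation step that carries an arbitrary open hemisphere to the upper one and added the justification (via Theorem~\ref{theo:zero-cells}) that $\bar{H}$ indeed covers the hemisphere in question; otherwise the argument is identical.
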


In order to find a covering set $S \subseteq \bar{H}$ for $S^2$, we choose any four open hemispheres that cover $S^2$ (e.g., take the set $\bar{H}(X)$, where $X$ is the set of facets of some tetrahedron, which covers $S^2$ by Theorem~\ref{theo:zero-cells}).
We then find a covering set of size three for each of these hemispheres, as described above.
Finally, by combining the resulting covering sets, we obtain a covering set for $S^2$ of total size 12.

We continue by proving the existence of a covering set for
$S^2$ of size six.

\pagebreak
\begin{theorem}
\label{theo:min-covering}
There exists a covering set of size six for $S^2$.

\end{theorem}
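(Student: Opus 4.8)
The plan is to recast the property ``covers $\S^2$'' as a statement about the convex hull of a finite point set in $\R^3$, and then to trim that point set by a Carath\'eodory-type argument. For a facet $F_i$ let $q_i:=-\nu(F_i)/\|\nu(F_i)\|\in\S^2$ denote the pole of the open hemisphere $\hibar$, so that $\hibar=\{\vec d\in\S^2:\vec d\cdot q_i>0\}$. The first step is the dictionary: a subfamily $\{\hibar:i\in I\}$ of $\bar H$ covers $\S^2$ if and only if $\vec 0\in\operatorname{int}\operatorname{conv}\{q_i:i\in I\}$. Indeed, a direction $\vec d$ is missed by every hemisphere in the subfamily exactly when $\vec d\cdot q_i\le 0$ for all $i\in I$, so the subfamily covers $\S^2$ iff the closed convex cone $\{x\in\R^3:x\cdot q_i\le 0\text{ for all }i\in I\}$ equals $\{\vec 0\}$; and, by the separating-hyperplane theorem, this cone is trivial precisely when $\vec 0$ lies in the interior of $\operatorname{conv}\{q_i:i\in I\}$.

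By Theorem~\ref{theo:zero-cells} no cell of $\A(\C)$ has depth $0$, i.e., $\bar H$ itself covers $\S^2$; hence, by the dictionary, $\vec 0\in\operatorname{int}\operatorname{conv}\{q_1,\dots,q_n\}$. It now suffices to extract from $\{q_1,\dots,q_n\}$ a subset of at most six points that still contains $\vec 0$ in the interior of its convex hull: running the dictionary backwards on the matching hemispheres then produces a covering-set of size at most six. This extraction is an instance of a classical theorem of Steinitz ($2d$ points always suffice in $\R^d$; here $d=3$). For completeness I would give the argument directly. Let $Q=\operatorname{conv}\{q_1,\dots,q_n\}$, a polytope containing $\vec 0$ in its interior, and choose a line $\ell$ through $\vec 0$ in general position with respect to $Q$. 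Then $\ell\cap Q$ is a segment $[p^-,p^+]$ with $\vec 0$ in its relative interior, and $p^+$ (resp.\ $p^-$) lies in the relative interior of a facet $G^+$ (resp.\ $G^-$) of $Q$, with $G^+\neq G^-$. Pick three vertices $A^+$ of $G^+$ whose triangle contains $p^+$ in its relative interior, and likewise $A^-\subseteq G^-$. Expanding $\vec 0$ as a convex combination of $p^+$ and $p^-$, and each of $p^{\pm}$ as a strictly positive convex combination of $A^{\pm}$, writes $\vec 0$ as a convex combination of the at most six points $A^+\cup A^-$ in which every coefficient is positive; hence $\vec 0$ lies in the relative interior of $\operatorname{conv}(A^+\cup A^-)$. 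Since distinct facets of a polytope have distinct affine hulls, $\operatorname{aff}(A^+)=\operatorname{aff}(G^+)$ and $\operatorname{aff}(A^-)=\operatorname{aff}(G^-)$ are two distinct planes, whose union affinely spans $\R^3$; therefore so does $A^+\cup A^-$, the relative interior above is the interior, and $A^+\cup A^-$ is the desired subset.

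The main obstacle is not any isolated hard step but pinning the dictionary down exactly: tracking open versus closed hemispheres and strict versus non-strict inequalities, and phrasing the convex-hull equivalence so that it does not tacitly assume $\operatorname{conv}\{q_i\}$ is full-dimensional. The one genuinely delicate point in the trimming step is that the bound is $2d=6$ and not $d+1=4$ --- Carath\'eodory alone only places $\vec 0$ in the convex hull, not necessarily the interior, of four poles --- and this is unavoidable: for a cube the six facet-hemispheres have poles $\pm e_1,\pm e_2,\pm e_3$ and cover $\S^2$, yet deleting any one of them leaves $\vec 0$ on the boundary of the convex hull of the remaining five, so no five of a cube's hemispheres cover $\S^2$. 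Thus the constant six is best possible (and, consistently, a cube has exactly six valid top facets).
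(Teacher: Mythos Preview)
Your argument is correct and it is genuinely different from the paper's. The paper does not dualize at all: it picks a great circle $G$ in general position with respect to $\A(\C)$, appeals to Observation~\ref{obs:hemisphere-covering} (itself a Helly consequence) to cover each of the two open hemispheres bounded by $G$ with three members of $\bar H$, and then observes that by genericity every cell of $\A(\C)$ meeting $G$ also meets one of the two open sides, so the six hemispheres already cover $G$ as well. In other words, the paper's ``$3+3$'' arises from two planar Helly applications, whereas your six arises from Steinitz's $2d$ bound via the dictionary $\{\hibar\}_{i\in I}\text{ covers }\S^2 \Leftrightarrow \vec 0\in\operatorname{int}\operatorname{conv}\{q_i:i\in I\}$.

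What each approach buys: the paper's route is very short once Observation~\ref{obs:hemisphere-covering} and Lemma~\ref{lem:helly} are in hand, and it stays entirely on the sphere; however it leans on a genericity argument to handle the equator and is explicitly non-constructive in this version. Your route is self-contained (you re-prove the Steinitz $2d$ statement rather than quoting it), it makes the tightness transparent via the cube, and it exposes a clean reason why the constant is $2d$ rather than $d+1$. Two small points worth tightening in a write-up: (i) Theorem~\ref{theo:zero-cells} is stated for \emph{faces} of $\A(\C)$, so to conclude that $\bar H$ covers all of $\S^2$ you should note that every edge or vertex of $\A(\C)$ has the same depth as one of its adjacent faces (or reprove the covering directly, e.g.\ via $\sum_i \mathrm{area}(F_i)\,\nu(F_i)=\vec 0$); (ii) the existence of three vertices of $G^{\pm}$ whose triangle contains $p^{\pm}$ in its \emph{relative interior} genuinely uses that $\ell$ was chosen generically (so that $p^{\pm}$ avoids all diagonals of $G^{\pm}$), which you assert but might spell out.
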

\begin{proof}
Let $G$ be some great circle on $\S^2$ that does not contain a vertex of $\A(\C)$ and is not contained in $\C$.
$G$ is composed of parts of cells in $\A(\C)$ that are partially contained in the two open hemispheres defined by $G$. This means that any covering set of an open hemisphere defined by $G$, covers $G$ as well.

By that and by Observation~\ref{obs:hemisphere-covering}, the covering set of $\S^2$ is of size six at most.
\end{proof}

\noindent
{\bf Remark.} This is only a proof of
existence.
We do not provide an algorithm to compute a covering set of size six.
The problem with turning the proof of Theorem~\ref{theo:min-covering} into a linear-time algorithm is that we do not know how to find deterministically in linear time a great circle on $S^2$ that avoids all the vertices of the arrangement.                                          

\begin{theorem}
\label{theo:six-top-facets-1}
The number of valid top facets for any polyhedron is at most six and this bound is tight.
\end{theorem}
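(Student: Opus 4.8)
The plan is to obtain Theorem~\ref{theo:six-top-facets-1} as an immediate corollary of the two preceding theorems. First I would invoke Theorem~\ref{theo:min-covering} to fix a covering-set $S\subseteq\bar{H}$ with $|S|=6$. By Theorem~\ref{theo:covering-set-top-facets} we have $\bar{H}(\TF)\subseteq S$, and hence $|\bar{H}(\TF)|\leq 6$. The only remaining gap is to pass from the number of hemispheres $\bar{h}(F_i)$, $F_i\in\TF$, back to the number of facets in $\TF$ itself, i.e.\ to show that the map $F_i\mapsto\bar{h}(F_i)$ is injective on $\TF$.

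That injectivity is the one point that genuinely needs an argument: for a (possibly non-convex) polyhedron two distinct facets can share the same inward normal $\nu(\cdot)$, and therefore the same hemisphere, so the map need not be injective on the full set of facets. I would rule this out on $\TF$ as follows. Suppose $F_i,F_j\in\TF$ with $i\neq j$ and $\bar{h}(F_i)=\bar{h}(F_j)$. Since $F_i$ is a valid top facet, Theorem~\ref{theo:valid-facet} gives a depth-$1$ cell $\xi$ of $\A(\C)$ that is covered by $\hibar$. Any direction $\vec{d}\in\xi$ then lies in $\hibar$, and since $\hjbar=\hibar$ it also lies in $\hjbar$; hence the index set $\{k:\vec{d}\in\bar{h}_k\}$ contains both $i$ and $j$, so $\xi$ has depth at least~$2$, a contradiction. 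Therefore the map is injective on $\TF$, and $|\TF|=|\bar{H}(\TF)|\leq|S|=6$.

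I do not expect a real obstacle here, since all the substantive work is already carried out in Theorem~\ref{theo:min-covering} (the existence of a size-six covering-set) and Theorem~\ref{theo:covering-set-top-facets}; the present statement is essentially a one-line composition of the two, the injectivity remark being the only step requiring care. An equivalent way to phrase the last step, if one prefers to avoid speaking of injectivity, is to note that by Theorem~\ref{theo:valid-facet} each valid top facet $F_i$ is the \emph{unique} facet whose hemisphere covers some depth-$1$ cell, so distinct valid top facets contribute distinct elements of any covering-set, of which there are at most six.
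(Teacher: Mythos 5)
Your proposal is correct and follows exactly the paper's route: combine Theorem~\ref{theo:min-covering} (a covering-set of size six exists) with Theorem~\ref{theo:covering-set-top-facets} ($\bar{H}(\TF)\subseteq S$). The injectivity of $F_i\mapsto\bar{h}(F_i)$ on $\TF$, which you justify via the depth-$1$ cell argument, is a detail the paper passes over silently, and your treatment of it is sound.
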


\begin{proof}
As we showed in Theorem~\ref{theo:min-covering}, there exists a covering set of size six, and by Theorem~\ref{theo:covering set-top-facets}, we know that the number of valid top facets in a polyhedron is bounded by the size of any covering set.
Thus, the number of valid top facets is bounded by six.

Notice that this is tight---all the six facets of a parallelepiped are valid top facets. 
\end{proof}

\section{Algorithms}
The algorithms for both the AllFAD and AllFSD problems are similar. 
We first find a covering set of constant size, as previously discussed. Then, for each facet in the covering set we solve SingleFAD or SingleFSD~\cite[Chapter~4]{Berg:2008:CGA:1370949}, depending on whether we are solving AllFAD or AllFSD. 
Finally, we return all the removal directions for each valid top facet in the covering set.

Algorithm~\ref{alg:threeD} sketches both solutions.
The difference between the AllFSD and AllFAD variants is in the function handleSingleFacet($P$, $F$), which for a given polyhedron and a facet, solves SingleFAD in $O(n\log n)$ time or SingleFSD in linear time \cite[Chapter~4]{Berg:2008:CGA:1370949}.

\begin{algorithm}
\DontPrintSemicolon
\KwIn{$P$ a polyhedron}
\KwOut{ A list of all the top facets and their removal direction(s)}
$T \gets \emptyset$\;

$C \gets$ coveringSet($P$)\;
\For{$c \in  C$}{
	\emph{directions} $\gets$ handleSingleFacet($P$, \emph{c.facet})\;
  	\uIf {directions.notEmpty}{
  		$T$.add(\emph{c.facet}, \emph{directions})\;
  	}
}
\Return{$T$}\;
\caption{3D AllFAD/AllFSD}
\label{alg:threeD}

\end{algorithm}

We obtain the following:

\begin{theorem}
Given a polyhedron $P$ as input, our algorithm for AllFSD outputs a list of all valid top facets and a single removal direction for each such facet in total $O(n)$ time, where $n=|P|$.
\end{theorem}

\begin{theorem}
Given a polyhedron $P$ as input, our algorithm for AllFAD outputs a list of all valid top facets and all valid removal direction for each such facet in total $O(n \log n)$ time, where $n=|P|$.
\end{theorem}

When solving the AllFAD variant, one may wish to run the AllFSD algorithm  first in order to end up with a linear-time algorithm in case that the polyhedron is not castable with a single-part mold.
We remark that for the 2D analog of our problem, namely when the object to be cast is a polygon, Asberg et al.~\cite{DBLP:journals/algorithmica/AsbergBBGOTWZ97} %
solve AllFAD in linear time and constant space using similar techniques.

\section{Lower Bound}
In this section we show an $\Omega (n \log n)$ lower bound for SingleFAD.
We establish the lower bound in the algebraic computation tree model~\cite{ben1983lower} using a reduction from sorting.
The lower bound proves the optimality of both the existing SingleFAD algorithm~\cite[Chapter~4]{Berg:2008:CGA:1370949} and our new algorithm for \mbox{AllFAD} in this model.

\begin{figure*}[htp]
\centering
\includegraphics[width=0.42\textwidth] %
{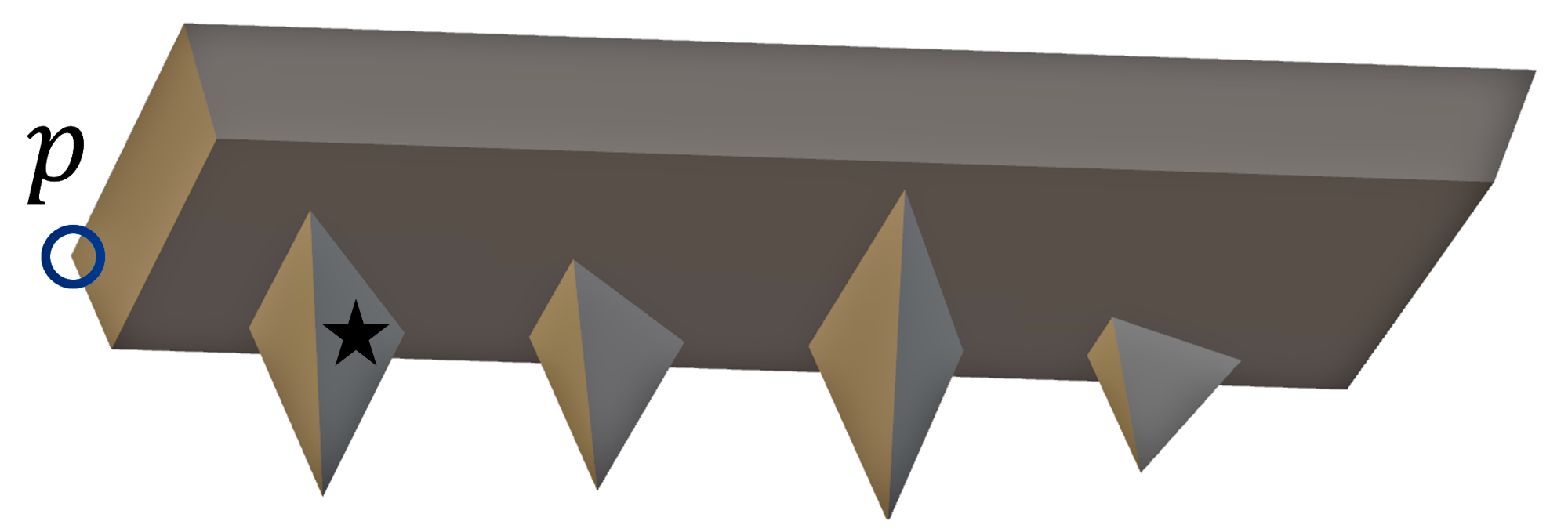}
\hfill
\includegraphics[width=0.30\textwidth] %
{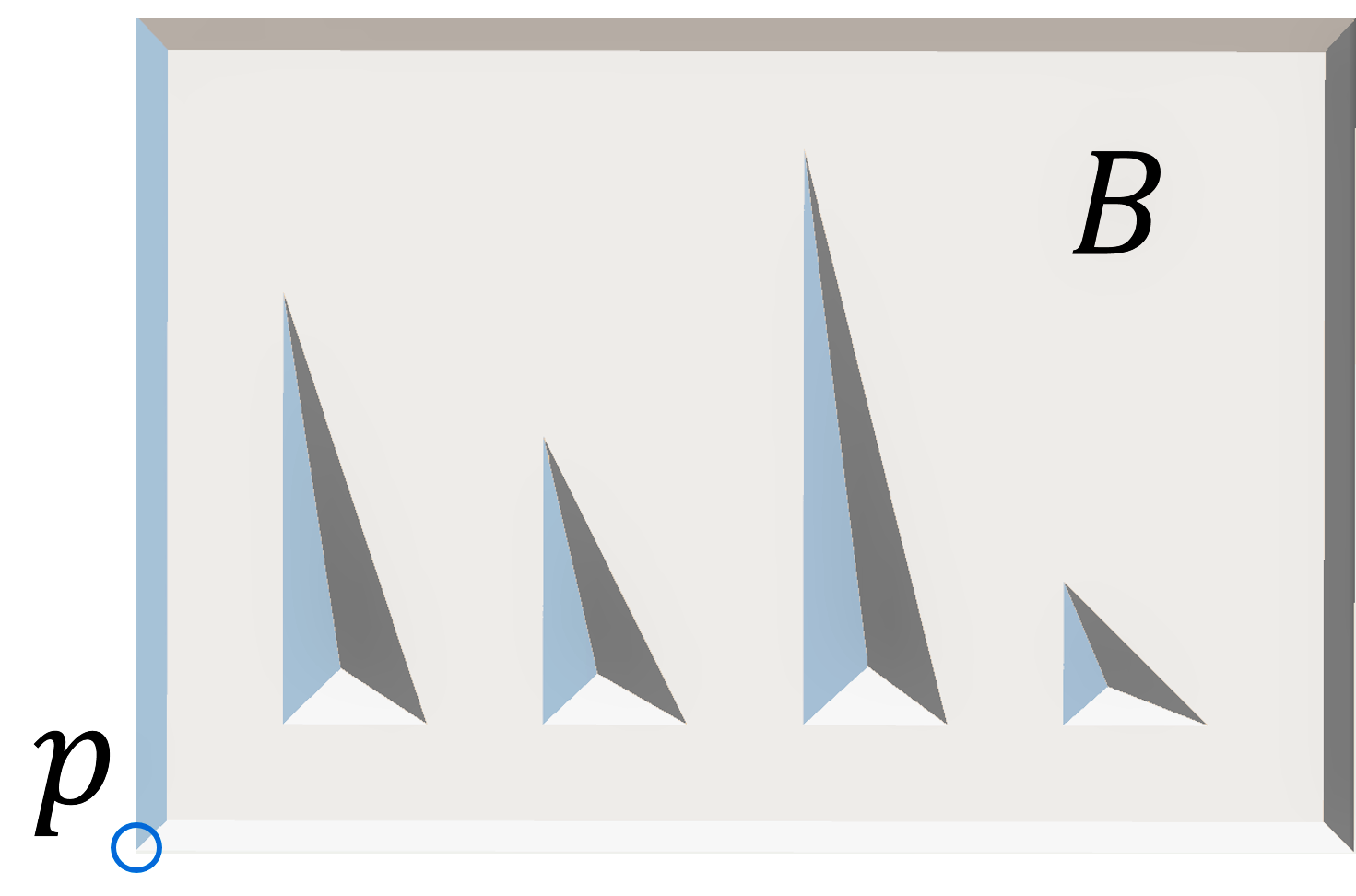}
\hfill
\includegraphics[width=0.2\textwidth] %
{Q.png}
\caption{A construction of $P$ for $X=\{3,2,4,1\}$ viewed from the side (left) and from below (middle).
The tetrahedra $T_1, \ldots, T_4$ appear from left to right and their darkest facets are the $F_i$'s (the facet marked by a star is $F_1$).
The corresponding polygon $Q$ that represents the removal directions for $P$, with the four slopes sorted in ascending order from the bottom up in color (right).
 \label{fig:reduction}
}
\end{figure*}

Let $X = \{x_1, \ldots, x_n\}$ be $n$ positive real numbers, which we wish to sort.
We describe a polyhedron $P$ such that the convex polygon $Q$ representing all the removal directions of its top facet will contain the numbers of $X$ as the slopes of its edges.
We can then read off $X$ in sorted order from the edges of $Q$.
We begin with $P$ as a rectangular frustum whose two bases, which are rectangular facets, are parallel to the plane $z=0$; see Figure~\ref{fig:reduction}.
Let $B$ denote the bottom base of $P$, which is smaller than the top base.
For a facet $F$ of $P$, let $g(F)$ denote the half-plane obtained by the central projection of $F$'s inner closed hemisphere, $h(F)$, onto the plane $z=1$ (as in\footnote{$g(F)$ is a half-plane on $z = 1$ containing all the directions $\vec{d}$ with a positive $z$ component such that $\vec{d}\cdot\nu(F)\geq 0$} Definition~\ref{defi:half-plane}) and let $\ell(F)$ denote the line that defines $g(F)$.
We call $g(F)$ the \emph{corresponding half-plane} of $F$ and $\ell(F)$  the \emph{corresponding line} of $F$. %
Other than the two bases of $P$, each facet $F$ in $P$ will have the property that $g(F)$
contains the origin and $\ell(F)$ is tangent to the unit circle.
For such a facet $F$, we can find its inner normal $\nu(F)$ such that we attain a desired line $\ell(F)$ that is tangent to the unit circle as follows:
Take the normal vector $v$ of $\ell(F)$ that points to the origin and tilt it upwards about $\ell(F)$ by $\pi/4$, which results in $\nu(F)$.

We now describe the remaining facet of $P$. The corresponding lines of the four non-base facets of $P$ are $x=-1,x=1,y=-1,y=1$, respectively.
For each $x_i \in X$, $P$ has a corresponding facet $F_i$ such that $\ell(F_i)$ has the slope $x_i$.
To add $F_i$ to $P$, we define a tetrahedron $T_i$ whose top facet fully overlaps with $B$, i.e., $T_i$ is attached to $B$ from below.
The remaining three facets of $T_i$ are $F_i$ and two facets whose corresponding lines are $x=-1$ and $y=1$, respectively.
The $T_i$'s are arranged in a row along $B$.

After applying a SingleFAD algorithm on the top base of $P$, the output convex polygon $Q$ is the common intersection of the corresponding half-planes of all of $P$'s facets except the top base (each point in the common intersection corresponds to a removal direction).
The corresponding half-planes of the facets $F_1, \ldots, F_n$ contribute to a contiguous set of edges along $Q$ that are ordered by their slopes, which are $x_1, \ldots, x_n$.
Finally, $P$ has $3n+6$ facets and can be constructed in linear time.
Therefore, we obtain the following:

\begin{theorem}
There is an $\Omega (n \log n)$ lower bound for SingleFAD and AllFAD in the algebraic computation tree model~\cite{ben1983lower}.
\end{theorem}

\begin{cor}
Our algorithms for AllFSD and AllFAD are optimal.
\end{cor}

A natural question is whether we can obtain even more efficient algorithms for special classes of polyhdera.
In the next section we show that this is indeed possible for convex polyhedra.
We now consider another restricted class of polyhedra and show that the lower bound still holds for it.

A polyhedron $\mathcal{P}$ is called \emph{star-shaped} if it contains a point $p$ such that, for any other point $q$ in $\mathcal{P}$, the line segment $\overline{pq}$ lies in $\mathcal{P}$, i.e., $p$ can see all of $\mathcal{P}$.
Such a point $p$ exists if the intersection of the half-spaces defined by the facets of $\mathcal{P}$ (facing into $\mathcal{P}$) is not empty.
We may modify the construction of $P$ above so that $P$ is star-shaped.
For the point $p$ that can see of all $P$, we take the top left vertex (in the $xy$-plane) of the top facet, which is parallel to $B$; see Figure~\ref{fig:reduction}.
The point $p$ is contained in the half-spaces corresponding to all facets of $P$ except for facets of the tetrahedra $T_1, \ldots, T_n$ that have the corresponding lines $x=-1$ and $y=1$.
We may tilt such facets so that their corresponding lines become $x=-c$ and $y=c$ respectively, where $c$ is sufficiently large so that $p$ can fully see the tilted facets.
This change has no effect on the possible removal directions, since the frustum part of $P$ remains the same, and so it still contains the two facets with the corresponding lines $x=-1$ and $y=1$.
Therefore, $Q$ remains the same and we may read off $X$ in sorted order from its edges as before.

\section{Casting Convex Polyhedra}
In this section we show how to solve the AllFAD problem for a \emph{convex} polyhedron more efficiently than for an arbitrary polyhedron, namely in $O(n)$ time.
We start with some simple observations.
\begin{obs}
\label{obs:consecutive-normals}
Let $Q$ be a convex polygon in the plane whose edges are given in cyclic order $\E$, say clockwise. Given a direction $\vec{d}$ in the plane, all the edges of $Q$ whose inner-facing normals have a non-negative scalar product with $\vec{d}$ form a consecutive subchain of $\E$.
\end{obs}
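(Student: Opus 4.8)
The plan is to exploit the standard fact that the edge normals of a convex polygon, listed in the cyclic order of the edges, occur in monotone angular order around the circle of directions $\S^1$. First I would observe that when the boundary of $Q$ is traversed clockwise (the order $\E$), the directed edges turn with a consistent sense --- only clockwise turns occur at the vertices --- and the total turning over one full traversal is exactly $2\pi$. The inner-facing normal of an edge is that edge's direction vector rotated by a fixed quarter turn, so the inner normals $\nu_1,\nu_2,\ldots$ of the edges, taken in the order $\E$, are obtained from the edge directions by one and the same rigid rotation; hence they too turn only in one sense and sweep a total angle of $2\pi$ over the cycle. Since a convex polygon cannot have two edges with the same direction vector, this cyclic sequence of normals is in fact \emph{strictly} monotone around $\S^1$.

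Next I would restate the selection rule geometrically. For the fixed direction $\vec{d}$, the set $H_{\vec{d}} := \{\, \vec{n} \in \S^1 : \vec{n}\cdot\vec{d} \ge 0 \,\}$ is a closed circular arc of angular length exactly $\pi$ --- the closed half-circle of directions making an angle at most $\pi/2$ with $\vec{d}$, whose two endpoints are the directions orthogonal to $\vec{d}$. The edges singled out by the observation are precisely those whose inner normal lies in $H_{\vec{d}}$.

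Finally I would combine the two facts: a strictly monotone cyclic sequence of points on $\S^1$ meets any proper arc (one of length $<2\pi$) in a set of consecutive elements, since as the sweep advances it enters the arc at most once and leaves it at most once per revolution, so the points that fall inside form a single contiguous run. Applying this with the arc $H_{\vec{d}}$, whose length $\pi$ is less than $2\pi$, shows that the edges whose inner normals lie in $H_{\vec{d}}$ form a consecutive subchain of $\E$, which is the claim. The only things needing care are the degenerate case $\vec{n}\cdot\vec{d}=0$ for one or both extreme normals --- harmless because $H_{\vec{d}}$ is taken closed and is still a proper arc --- and the justification of the strict cyclic monotonicity of the normals, which is the single place where convexity of $Q$ is used in an essential way and which I expect to be the main (though modest) obstacle.
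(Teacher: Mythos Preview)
The paper does not prove this statement; it is recorded as an observation and left without argument. Your proposal is correct and supplies exactly the standard justification: the inner normals of a convex polygon, read off in the cyclic edge order $\E$, wind strictly monotonically once around $\S^1$, and the condition $\vec{n}\cdot\vec{d}\ge 0$ cuts out a closed half-circle, so the selected normals form a single contiguous run. The one phrase worth tightening is ``a convex polygon cannot have two edges with the same direction vector'': what you actually need, and what convexity gives, is that no two \emph{directed} edges in the clockwise traversal share a direction, equivalently that no two edges share an inner normal. Parallel opposite edges (as in a rectangle) are permitted, but their inner normals are antipodal on $\S^1$, so strict cyclic monotonicity of the normal sequence still holds and your arc-intersection argument goes through unchanged.
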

Let $\pi$ be a plane in $R^3$ that is parallel to a direction $\vec{d}$, intersects some convex polyhedron, $P$, but does not intersect any of its vertices. Let $Q$ be the convex polygon, which is the non-empty intersection of the convex polyhedron $P$ with $\pi$. For each $F_i$ that intersects with $\pi$ we denote by $e_i$ the edge $F_i \cap \pi$ of $Q$. Let $\nu(e_i)$ denote the inner facing normal of $e_i$ in $Q$.
\pagebreak
\begin{lemma}
\label{lem:normal-projection}
If ${\rm sign}(\nu(F_i)\cdot\vec{d})\neq 0$, then ${\rm sign}(\nu(e_i)\cdot\vec{d})= {\rm sign}(\nu(F_i)\cdot\vec{d})\;$.	
\end{lemma}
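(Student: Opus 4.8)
The plan is to express $\nu(e_i)$ in terms of the orthogonal projection of $\nu(F_i)$ onto the plane $\pi$, and then exploit that $\vec{d}$ already lies in $\pi$. Let $\Pi_i$ be the plane supporting $F_i$, so that $e_i \subseteq \Pi_i\cap\pi$, and decompose $\nu(F_i)=v^{\parallel}+v^{\perp}$, where $v^{\parallel}$ is the component parallel to $\pi$ and $v^{\perp}$ the component orthogonal to $\pi$. Since $\vec{d}$ is parallel to $\pi$ we have $\vec{d}\cdot v^{\perp}=0$, hence $\vec{d}\cdot\nu(F_i)=\vec{d}\cdot v^{\parallel}$; in particular the hypothesis ${\rm sign}(\nu(F_i)\cdot\vec{d})\neq 0$ already forces $v^{\parallel}\neq\vec{0}$. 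It therefore suffices to prove that $\nu(e_i)$ is the \emph{positive} multiple $v^{\parallel}/\|v^{\parallel}\|$ of this projection, because then ${\rm sign}(\nu(e_i)\cdot\vec{d})={\rm sign}(v^{\parallel}\cdot\vec{d})={\rm sign}(\nu(F_i)\cdot\vec{d})$.

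To identify $\nu(e_i)$, I would pick a point $p$ in the relative interior of the segment $e_i$. Because $\pi$ meets no vertex of $P$, the endpoints of $e_i$ lie in relative interiors of edges of $F_i$, so $p$ lies in the relative interior of the facet $F_i$. By convexity of $P$, in a sufficiently small neighborhood of $p$ the polyhedron coincides with the single supporting half-space $\{x : (x-p)\cdot\nu(F_i)\ge 0\}$. Intersecting with $\pi$ and using that for $x\in\pi$ the displacement $x-p$ is orthogonal to $v^{\perp}$, we get $(x-p)\cdot\nu(F_i)=(x-p)\cdot v^{\parallel}$, so near $p$ the polygon $Q=P\cap\pi$ coincides with the half-plane $\{x\in\pi : (x-p)\cdot v^{\parallel}\ge 0\}$ of $\pi$. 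Hence, within $\pi$, the edge $e_i$ is supported by the line through $p$ orthogonal to $v^{\parallel}$, and $Q$ lies on the side into which $v^{\parallel}$ points; by definition this is exactly the side indicated by the inner-facing normal $\nu(e_i)$, so $\nu(e_i)=v^{\parallel}/\|v^{\parallel}\|$. Combined with the first paragraph, this establishes the lemma. (Note that the excluded case ${\rm sign}(\nu(F_i)\cdot\vec{d})=0$ corresponds precisely to $v^{\parallel}=\vec 0$, i.e.\ $F_i\perp\pi$, where the statement need not make sense.)

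The only step that requires genuine care is the local description of $P$ near $p$: one must invoke convexity together with the fact that $p$ avoids the lower-dimensional skeleton of $P$ in order to assert that, locally, $P$ equals a single supporting half-space rather than an intersection of several half-spaces, as happens along edges and at vertices. Once that is in place, everything else is elementary linear algebra about orthogonal projection, relying only on the fact that projecting onto $\pi$ preserves scalar products against vectors that already lie in $\pi$ (such as $\vec{d}$), and that $\nu(e_i)$ is by construction a positive multiple of $v^{\parallel}$.
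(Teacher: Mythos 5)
Your proof is correct. It takes a recognizably different route from the paper's: you explicitly identify the inner-facing edge normal as the normalized orthogonal projection of the facet normal onto $\pi$, i.e.\ $\nu(e_i)=v^{\parallel}/\|v^{\parallel}\|$, via a local supporting-half-space description of $P$ near a relative-interior point of $e_i$, and then use that dot products against a vector lying in $\pi$ (namely $\vec{d}$) ignore the component $v^{\perp}$. The paper instead never computes $\nu(e_i)$ at all: it uses the behavioral characterization of the sign --- a small step $p+\varepsilon\vec{d}$ from a point $p$ of the facet lands inside $P$ exactly when $\nu(F_i)\cdot\vec{d}>0$ --- and observes that for $p\in e_i$ this step stays in $\pi$, so ``inside $P$'' and ``inside $Q$'' coincide, transferring the sign to $\nu(e_i)$. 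Both arguments hinge on the same fact that $\vec{d}$ is parallel to $\pi$, but yours is more explicit and self-contained (it actually produces the formula for $\nu(e_i)$ and isolates exactly why the degenerate case $\nu(F_i)\cdot\vec{d}=0$, i.e.\ $v^{\parallel}=\vec{0}$, is excluded), at the cost of needing the careful local analysis near $p$ that you correctly flag --- using that $\pi$ avoids the vertices of $P$ so that the interior of $e_i$ sits in the relative interior of $F_i$ and only one supporting half-space is active there. The paper's version is shorter but leans on the informal ``points into/out of the polyhedron'' reading imported from de Berg et al.
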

\begin{proof}
Let $D$ be a vector in direction $\vec{d}$ and length $\varepsilon$ for some small $\varepsilon$. 
${\rm sign}(\nu(F_i)\cdot\vec{d})$ indicates whether the direction $\vec{d}$ points into or out of $P$ when it starts from a point on $F_i$~\cite[Chapter~4]{Berg:2008:CGA:1370949}, i.e, if ${\rm sign}(\nu(F_i)\cdot\vec{d})>0$ (${\rm sign}(\nu(F_i)\cdot\vec{d})<0$) for each point $p \in F_i$, then $p+D$ is inside (outside of) the polyhedron.

After the intersection with $\pi$, each point on $e_i$ must still fulfill these conditions, since for each point $p \in e_i$, $p+D$ is in $\pi$ and therefore it is in the polygon $Q$ if and only if it is in the original polyhedron.

\end{proof}
We say that two facets of the input polyhedron $P$ are neighbors if their closures intersect at an edge of the polyhedron.  Denote by $\M_i$ the set of neighbors of the facet $F_i$, by $m_i$ the cardinality of this set, and by $J_i$ the index set of the facets in $\M_i$.
The efficient AllFAD algorithm for convex polyhedra stems from the observation that it suffices to consider only the neighbors of $F_i$ in order to determine if $F_i$ is valid, which we formalize next.
\begin{lemma}
\label{lem:valid-pair-convex}
For a convex polyhedron, the pair $(F_i,\vec{d})$ represents a valid orientation of a mold and removal direction if and only if 

(i) $\vec{d}\cdot\nu(F_i)<0$, and 

(ii)  $\forall j\in J_i,\;\;  
\vec{d}\cdot\nu(F_j)\geq 0$.
\end{lemma}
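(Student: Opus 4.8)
The plan is to derive this from Lemma~\ref{lem:valid-pair} by showing that, for a convex polyhedron, the global condition ``$\vec{d}\cdot\nu(F_j)\ge 0$ for all $j\ne i$'' is equivalent to the local condition ``$\vec{d}\cdot\nu(F_j)\ge 0$ for all neighbors $F_j$ of $F_i$.'' One direction is trivial: the conditions of Lemma~\ref{lem:valid-pair-convex} are implied by those of Lemma~\ref{lem:valid-pair}, since $\M_i\subseteq\{F_1,\dots,F_n\}\setminus\{F_i\}$. So the whole content is the converse: assuming (i) $\vec{d}\cdot\nu(F_i)<0$ and (ii) $\vec{d}\cdot\nu(F_j)\ge 0$ for every neighbor $F_j\in\M_i$, I must show $\vec{d}\cdot\nu(F_k)\ge 0$ for \emph{every} facet $F_k\ne F_i$.

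First I would dispose of the degenerate possibility that $\vec{d}\cdot\nu(F_k)=0$ for the facets under consideration by a limiting argument, or simply treat the zero case as already satisfying~(ii); the substantive case is $\vec{d}\cdot\nu(F_k)\ne 0$. Suppose for contradiction that some facet $F_k$ (necessarily $k\ne i$ and $F_k\notin\M_i$) has $\vec{d}\cdot\nu(F_k)<0$. Now I would take a plane $\pi$ parallel to $\vec{d}$ that meets both $F_i$ and $F_k$ in their relative interiors and avoids all vertices of $P$ (such a plane exists: pick interior points of $F_i$ and $F_k$, take the line through them if it is not parallel to $\vec d$ perturb slightly, then take the plane containing that line and parallel to $\vec d$, perturbing once more to miss the finitely many vertices). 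The intersection $Q=P\cap\pi$ is a convex polygon, and $F_i,F_k$ contribute edges $e_i,e_k$ of $Q$. By Lemma~\ref{lem:normal-projection}, $\mathrm{sign}(\nu(e_i)\cdot\vec d)=\mathrm{sign}(\nu(F_i)\cdot\vec d)<0$ and $\mathrm{sign}(\nu(e_k)\cdot\vec d)=\mathrm{sign}(\nu(F_k)\cdot\vec d)<0$. By Observation~\ref{obs:consecutive-normals} applied to $Q$ and the direction $-\vec d$, the edges of $Q$ whose inner normals have negative scalar product with $\vec d$ (equivalently, non-negative with $-\vec d$, up to the zero-measure boundary cases, which I would handle by the same perturbation) form a consecutive subchain $\mathcal{B}$ of the edge cycle of $Q$; both $e_i$ and $e_k$ lie in $\mathcal{B}$. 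Since $e_i,e_k\in\mathcal B$ and $e_i\ne e_k$, the chain $\mathcal B$ contains at least one ``internal'' vertex of $Q$ adjacent to $e_i$ along the way toward $e_k$; that vertex lies on an edge $e_\ell$ of $Q$ that shares an endpoint with $e_i$, and this endpoint corresponds to an edge of $P$ shared by $F_i$ and $F_\ell$ — so $F_\ell\in\M_i$. But $e_\ell\in\mathcal B$ forces $\vec d\cdot\nu(e_\ell)<0$, hence $\vec d\cdot\nu(F_\ell)<0$ by Lemma~\ref{lem:normal-projection}, contradicting hypothesis~(ii). Here I should be slightly careful if $e_i$ and $e_k$ are adjacent in $Q$ (share a vertex directly): then that shared vertex already sits on an edge of $P$ common to $F_i$ and $F_k$, making $F_k\in\M_i$, contrary to assumption — so this subcase cannot arise.

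The main obstacle is making the ``consecutive subchain forces a bad neighbor'' step airtight: I need to guarantee that moving along the subchain $\mathcal B$ from $e_i$ one necessarily first encounters an edge of $Q$ that is a genuine cross-section of a \emph{neighbor} of $F_i$ in $P$, and that the corresponding vertex of $Q$ really certifies edge-adjacency of the two facets in $P$ (not, say, a facet touching $F_i$ only at a vertex). This is where I would lean on $\pi$ avoiding all vertices of $P$: each vertex of $Q$ lies on an edge of $P$, and an edge of $P$ is incident to exactly two facets, so the two edges of $Q$ meeting at that vertex come from exactly those two facets — which are therefore neighbors. With that observation the chain argument closes cleanly, and choosing the plane generically also removes all the $\mathrm{sign}=0$ annoyances, so the converse — and hence the lemma — follows.
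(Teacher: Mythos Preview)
Your proof is correct and follows essentially the same approach as the paper: both reduce to Lemma~\ref{lem:valid-pair}, assume a non-neighbor $F_k$ violates Condition~(ii), slice $P$ by a plane $\pi\parallel\vec d$ through $F_i$ and $F_k$ avoiding vertices, and use Observation~\ref{obs:consecutive-normals} together with Lemma~\ref{lem:normal-projection} on the convex polygon $Q=P\cap\pi$ to obtain a contradiction. The only cosmetic difference is in how the contradiction is phrased---the paper observes that the two $Q$-edges $e',e''$ adjacent to $e_i$ come from facets in $\M_i$ and hence carry non-negative sign, so the cyclic pattern $e',e_i,e'',e_k$ intertwines signs, whereas you equivalently argue that the consecutive ``negative'' chain containing $e_i$ and $e_k$ must include a $Q$-neighbor of $e_i$ arising from some $F_\ell\in\M_i$.
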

\begin{proof}
The only difference between Lemma~\ref{lem:valid-pair} and the current lemma is the set of facets on which Condition~(ii) is tested: here it is only on the neighboring facets of $F_i$. We argue that if the condition holds for the neighboring facets it will hold for all facets other than $F_i$.
Assume, for a contradiction that it holds for all neighboring facets, and there is a non-neighboring facet $F_k, k\neq i$ for which it does not hold. Let $s$ be a segment connecting a point inside $F_i$ and a point inside $F_k$. Let $\pi$ be a plane containing $s$ and parallel to $\vec{d}$. If $\pi$ intersects a vertex of $F_i$ then we move s (and $\pi$) slightly parallel to itself such that it does not cross any vertex of $F_i$.   Let $Q:=P\cap \pi$, and as above let $e_t:=F_t\cap\pi$ for every facet $F_t$ of $P$ that intersects $\pi$.
Let $e'$ and $e''$ be the two neighboring edges to $e_i$ in $Q$ --- notice that $e'$ and $e''$ are the intersection of $\pi$ with two neighboring facets of $F_i$.
By Lemma~\ref{lem:normal-projection} the scalar product of $\vec{d}$  with both $\nu(e')$ and $\nu(e'')$ is non-negative and with both $\nu(e_i)$ and $\nu(e_k)$ is negative. However, these form an intertwining cyclic subsequence of edges of $Q$: $e',e_i,e'',e_k$, in contradiction with Observation~\ref{obs:consecutive-normals}. This means that the inner-facing normal of $e_k$ has non-negative scalar product with $\vec{d}$ and the same holds for the inner-facing normal of $F_k$.   
\end{proof}
\begin{theorem}
\label{theo:convex-neighbors-intersection}
Given a convex polyhedron $Q$ and a specific top facet $F_i$, it is possible to find \emph{all} the removal directions in $O(m_i)$ time.
\end{theorem}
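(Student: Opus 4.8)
The plan is to show that, once the top facet $F_i$ is fixed, the set of valid pull-out directions is an intersection of $m_i$ half-planes that are handed to us \emph{already sorted by angle}, and then to intersect them by a linear-time envelope sweep.

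\emph{Reduction to a planar problem.} By Lemma~\ref{lem:valid-pair-convex}, the valid pull-out directions for the top facet $F_i$ are precisely the directions $\vec d$ with $\vec d\cdot\nu(F_i)<0$ and $\vec d\cdot\nu(F_j)\ge 0$ for every neighbour $F_j$ ($j\in J_i$); in the notation of Section~\ref{sect:Preliminary-Analysis} this is $\bar{h}_i\cap\bigcap_{j\in J_i}h_j$, an intersection of $m_i+1$ hemispheres on $\S^2$. Project $\bar{h}_i$ centrally from the center of $\S^2$ onto a plane $\Pi$ with normal $\nu(F_i)$ taken on the $\bar{h}_i$-side, exactly as in the construction of Definition~\ref{defi:half-plane} but with respect to $\bar{h}_i$. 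Then $\bar{h}_i$ maps bijectively onto all of $\Pi$ and, since a point of $\bar{h}_i$ and its image are positive multiples of one another, each constraint $\vec d\cdot\nu(F_j)\ge 0$ turns into a half-plane $b_j\subseteq\Pi$ bounded by the central image of the great circle $c_j$ — a genuine line, because $\nu(F_j)\neq\pm\nu(F_i)$ for a neighbour of a convex polyhedron (the handful of degenerate configurations, $\bar h_i\subseteq h_j$ or $\bar h_i\cap h_j=\emptyset$, are resolved on the spot). Hence the valid directions for $F_i$ are the inverse image of $\bigcap_{j\in J_i}b_j$.

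\emph{The half-planes are pre-sorted.} This is the crux. Rotate so that $F_i$ is horizontal with $\nu(F_i)$ pointing up into $Q$. For a neighbour $F_j$, $\nu(F_j)$ is orthogonal to the shared edge $e_{ij}$, and convexity of $Q$ forces the intersection of the half-space ``above the plane of $F_i$'' with the half-space ``on the $\nu(F_j)$-side of the plane of $F_j$'' to be a wedge along $e_{ij}$ that opens toward the interior of $F_i$; consequently the horizontal component of $\nu(F_j)$ — which is exactly the inner normal of the line bounding $b_j$ — is a positive multiple of the inward normal of the edge $e_{ij}$ inside the polygon $F_i$. Since the inward edge-normals of a convex polygon occur in the same cyclic order as the edges, the bounding lines of the half-planes $b_j$ occur, ordered by direction of inner normal, in exactly the cyclic order in which the edges $e_{ij}$ (equivalently the neighbours $F_j\in\M_i$) appear around $F_i$. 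Reading this order, together with the normals $\nu(F_j)$, off a standard boundary representation of $Q$ costs $O(m_i)$ time.

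\emph{Linear-time intersection.} After a generic rotation of the coordinates on $\Pi$ (so that no $b_j$ is ``vertical''), split the angularly sorted family $\{b_j : j\in J_i\}$ into the half-planes lying above their bounding line and those lying below; each sub-family is then sorted by slope, so its upper (resp.\ lower) envelope is built in $O(m_i)$ time by the textbook incremental scan, and overlaying the two envelopes takes another $O(m_i)$ time, yielding $\bigcap_{j\in J_i}b_j$ as a convex polygon (possibly unbounded, possibly empty) with $O(m_i)$ edges. Mapping it back through the central projection gives the set of all valid pull-out directions for $F_i$ (to be rotated by $R_i$ as in Theorem~\ref{theo:valid-facet}); an empty intersection certifies that $F_i$ is not a valid top facet. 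Every step runs in $O(m_i)$ time, which proves the theorem. The only delicate point is the pre-sortedness claim of the third paragraph; the rest merely combines the central-projection device already used in the paper with the standard linear-time envelope / half-plane-intersection primitive.
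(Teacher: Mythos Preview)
Your proof is correct and follows essentially the same route as the paper: invoke Lemma~\ref{lem:valid-pair-convex}, centrally project the neighbour constraints onto a plane orthogonal to $\nu(F_i)$, observe that the resulting half-planes come pre-sorted because their bounding-line directions coincide with the edge directions of the convex polygon $F_i$, and then intersect sorted half-planes in linear time. The only differences are cosmetic---you argue sortedness via the horizontal component of $\nu(F_j)$ matching the inward edge-normal (the paper phrases it as equality of slopes), and you spell out the upper/lower-envelope sweep where the paper simply cites Keil~\cite{KEIL1991121}.
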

\begin{proof}
Without loss of generality let's assume that $F_i$ is  horizontal (parallel to the $xy$-plane) and $\nu(F_i)$ points vertically downwards. We wish to find all the removal directions for which the conditions of Lemma~\ref{lem:valid-pair-convex} hold.
This can be done in $O(m_i\log m_i)$ time by finding the half-plane $g(F_j)$ for each facet $F_j$ in $\M_i$ and intersecting all of the resulting half-planes (see~\cite[Chapter~4]{Berg:2008:CGA:1370949}). 
We wish to reduce the running time of this procedure to $O(m_i)$ time. Luckily, these half-planes are sorted by their slope since the slope of the half-plane created by $F_j$ on $z = 1$ is equal to the slope of the edge on $F_i$ created by $F_j$ (again, $z = 1$ and $F_i$ are parallel).
The edges of a convex polygon are ordered by their slope.
The intersection of a given set of half-planes which are sorted according to the slope of their bounding lines,  can be computed in $O(m_i)$ time~\cite{KEIL1991121}. 
\end{proof}

The algorithm proceeds by fixing a facet $F_i$, computing its edge-neighboring facets and computing the set of allowable directions using Theorem~\ref{theo:convex-neighbors-intersection}. We repeat the procedure for each facet of $P$. Notice that $m_i$ is in fact the number of edges on the boundary of $F_i$. The overall cost of $O(m_i)$ time over all candidate top facets $F_i$ is $O(n)$ time by Euler's formula \cite[Chapter~13]{aigner2010proofs}, In summary,

\begin{theorem}
Given a convex polyhedron $P$ with $n$ facets, we  can solve the AllFAD problem for $P$ in time $O(n)$.
\end{theorem}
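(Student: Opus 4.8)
The plan is to assemble the final theorem from the machinery already in place: Theorem~\ref{theo:convex-neighbors-intersection} gives us a per-facet cost of $O(m_i)$ to compute all valid pull-out directions when $F_i$ is the top facet, where $m_i$ is the number of edges bounding $F_i$ (equivalently, the number of edge-neighbors of $F_i$). The only remaining work is to justify iterating this over \emph{every} facet of $P$ and to bound the total running time. Correctness is immediate from Lemma~\ref{lem:valid-pair-convex} together with Theorem~\ref{theo:convex-neighbors-intersection}: a facet $F_i$ is a valid top facet precisely when the set of directions returned for it is nonempty, and for convex $P$ this set depends only on $F_i$ and its edge-neighbors, so handling each facet independently loses nothing.

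First I would state the algorithm explicitly: for each facet $F_i$ of $P$, extract its edge-neighboring facets $\M_i$ (directly from the polyhedron's incidence structure, in $O(m_i)$ time), invoke the procedure of Theorem~\ref{theo:convex-neighbors-intersection} to obtain in $O(m_i)$ time the (possibly empty) region of valid directions, and record $F_i$ as a valid top facet together with this region whenever it is nonempty. Second, I would argue correctness: by Lemma~\ref{lem:valid-pair-convex}, $(F_i,\vec d)$ is a valid pair iff $\vec d\cdot\nu(F_i)<0$ and $\vec d\cdot\nu(F_j)\ge 0$ for all edge-neighbors $F_j$ of $F_i$; Theorem~\ref{theo:convex-neighbors-intersection} computes exactly the set of such $\vec d$, so the output is exactly $\TF$ with all associated pull-out directions.

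Third, and this is the only quantitative point, I would bound the total cost by $\sum_i O(m_i) = O\!\left(\sum_i m_i\right)$. Since $m_i$ equals the number of edges on the boundary of $F_i$, the sum $\sum_i m_i$ counts each edge of $P$ exactly twice (each edge lies on the boundary of exactly two facets), so $\sum_i m_i = 2E$, where $E$ is the number of edges of $P$. By Euler's formula for polyhedra, $V - E + F = 2$ with $F = n$, and since every facet of a (simple) polyhedron has at least three edges we get $2E \ge 3F$, and since every vertex has degree at least three we get $2E \ge 3V$; combining with Euler's relation yields $E \le 3n - 6 = O(n)$. Hence $\sum_i m_i = 2E = O(n)$ and the whole algorithm runs in $O(n)$ time; linear storage is clear since we never build anything larger than the input plus the output regions, whose total complexity is also $O(n)$.

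The main obstacle is essentially bookkeeping rather than mathematics: one must be careful that $m_i$ in Theorem~\ref{theo:convex-neighbors-intersection} is indeed the combinatorial degree of $F_i$ (so that the half-planes there are already sorted by slope and Keil's linear-time half-plane intersection~\cite{KEIL1991121} applies), and that the amortization over all facets is the double-counting-of-edges argument plus Euler's formula. A minor subtlety worth a sentence is that the ``slope-sortedness'' exploited in Theorem~\ref{theo:convex-neighbors-intersection} is obtained for free from the cyclic order of edges around $F_i$, which the polyhedron data structure provides in $O(m_i)$ time, so no global sorting step is needed; this is exactly what prevents an extra logarithmic factor and is the crux of why the bound is $O(n)$ rather than $O(n\log n)$.
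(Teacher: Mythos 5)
Your proposal is correct and follows essentially the same route as the paper: iterate over all facets, apply Theorem~\ref{theo:convex-neighbors-intersection} at cost $O(m_i)$ per facet, and bound $\sum_i m_i = 2E = O(n)$ via Euler's formula, with correctness coming from Lemma~\ref{lem:valid-pair-convex}. Your write-up is simply a more detailed version of the paper's argument, including the double-counting of edges and the observation that the slope-sorted order comes for free from the cyclic edge order around each facet.
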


We present one more result concerning the relation between the castability of a polyhedron $P$ and the castability of its convex hull $C\!H(P)$.
We show that every valid pair (see Definition~\ref{def:valid-pair}) of $P$, induces a corresponding valid pair for $C\!H(P)$. This means that the set of valid pairs of the convex hull of $P$ is a super-set of the valid pairs of $P$.

\begin{theorem}
\label{theo:polyhedron-convex-hull-castbility}
Given a polyhedron $P$, one of its valid top facets $F_i$, and a removal direction $\vec{d}$, $P$'s convex hull $Q$, can be removed through $CH(F_i)$ (the convex hull of $F_i$) with direction $\vec{d}$.
\end{theorem}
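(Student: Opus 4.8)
The plan is to use the characterization of valid pairs from Lemma~\ref{lem:valid-pair}: the pair $(F_i,\vec{d})$ is valid for $P$ exactly when $\vec{d}\cdot\nu(F_i)<0$ and $\vec{d}\cdot\nu(F_j)\ge 0$ for every other facet $F_j$ of $P$. I want to deduce the analogous inequalities for the convex hull $Q=C\!H(P)$ with the facet $F_{Qi}$ playing the role of the top facet. The first observation is that $F_i$ is genuinely a facet of $Q$: since there is a direction $\vec{d}$ with $\vec{d}\cdot\nu(F_i)<0$ and $\vec{d}\cdot\nu(F_j)\ge 0$ for all $j\ne i$, the plane supporting $F_i$ has all of $P$ on one side (walking from any point of $P$ in direction $-\vec{d}$ stays inside until one exits through $F_i$, so $F_i$ lies on the boundary of $Q$), and its outward normal is $-\nu(F_i)$; hence $F_{Qi}$ is well-defined and $\nu(F_{Qi})=\nu(F_i)$, giving condition~(i) for $Q$ immediately.

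For condition~(ii), I would argue that every facet $G$ of $Q$ other than $F_{Qi}$ satisfies $\vec{d}\cdot\nu(G)\ge 0$. The key point is that the set $K=\{\vec{u}:\vec{u}\cdot\nu(F_j)\ge 0\ \text{for all facets } F_j \text{ of } P\}$ is a convex cone, and by Theorem~\ref{theo:zero-cells} this cone (intersected with $\S^2$) is exactly the set of inward directions common to all facets — equivalently, $-K$ is the recession/normal information that builds $Q$. Concretely: each facet $G$ of $Q$ is supported by a plane whose outward normal $\vec{n}_G$ is an inward normal of $P$ "pointing the same way", i.e. $-\vec{n}_G$ lies in the normal cone structure; more carefully, since $Q\supseteq P$ and $Q$ is the intersection of the halfspaces bounded by the facet-planes of $P$ that are "supporting" (those $F_j$ whose plane has all of $P$ on one side), each facet $G$ of $Q$ lies in such a supporting plane of some facet $F_j$ of $P$, so $\nu(G)=\nu(F_j)$ for that $j$, and then $\vec{d}\cdot\nu(G)=\vec{d}\cdot\nu(F_j)\ge 0$ as soon as $j\ne i$. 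Thus condition~(ii) of Lemma~\ref{lem:valid-pair} holds for $Q$ and $F_{Qi}$, and therefore $(F_{Qi},\vec{d})$ is a valid pair for $Q$.

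The main obstacle I anticipate is the bookkeeping in the middle step: justifying that every facet of the convex hull $Q$ lies in a plane spanned by a facet $F_j$ of $P$ for which all of $P$ (hence all of $Q$) sits on one side — i.e., that $Q$ is cut out precisely by the subset of $P$'s facet-planes that are "globally supporting". This is intuitively clear (a facet of $C\!H(P)$ that is not itself a facet of $P$ would have to be spanned by vertices of $P$ and still support $P$, and a standard argument shows its supporting plane must contain a facet of $P$; the potential annoyance is degenerate cases where several facets of $P$ are coplanar or where a facet of $Q$ is supported by a plane through lower-dimensional features of $P$, which one rules out using that $P$ is a closed regular set with no dangling features). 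Once that structural fact is in hand, everything else is a direct substitution into Lemma~\ref{lem:valid-pair}. I would close by noting the immediate corollary needed later: since every valid pair of $P$ maps to a valid pair of $Q$, the valid top facets of $P$ inject into those of $Q$, and combining with Theorem~\ref{theo:six-top-facets-1} (or its covering-set proof) applied to the convex polyhedron $Q$ re-derives the bound of six.
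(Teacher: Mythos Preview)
Your approach via Lemma~\ref{lem:valid-pair} is different from the paper's (which uses Observation~\ref{obs:valid-pair2} and a Minkowski-sum argument), and it contains a genuine gap in the step you yourself flag as the main obstacle.

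The structural claim that every facet $G$ of $Q=C\!H(P)$ lies in the plane of some facet $F_j$ of $P$ is \emph{false}. Take a cube and replace one square face by a pyramidal indentation (four triangles meeting at an apex strictly inside the cube). The resulting polyhedron $P$ is a perfectly ordinary closed regular set; its convex hull is the original cube; and the cube facet covering the indentation lies in a plane that contains \emph{no} facet of $P$ whatsoever --- only four vertices of $P$. Yet $(F_i,\vec d)=(\text{opposite face},\,-\nu(F_i))$ is a valid pair for $P$, so this situation does arise. Your ``standard argument'' that a supporting plane of $C\!H(P)$ must contain a facet of $P$ simply does not exist; supporting planes of the convex hull are obligated to contain vertices of $P$, nothing more. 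Consequently the identity $\nu(G)=\nu(F_j)$ you rely on for condition~(ii) is unavailable.

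The paper sidesteps this entirely by working with the ray characterization of Observation~\ref{obs:valid-pair2}. Since $Q$ is convex, a ray from an interior point crosses $\partial Q$ once, so any failure for $Q$ must be a failure of condition~(i). The locus $L = F_{Qi}\oplus\{-t\vec d : t\ge 0\}$ of points satisfying condition~(i) is convex; validity of $(F_i,\vec d)$ for $P$ gives $P\subseteq L$ (because $F_i\subseteq F_{Qi}$), hence $Q=C\!H(P)\subseteq L$, and the contradiction follows. This avoids any claim about matching up facet normals of $P$ and $Q$. If you want to rescue your normal-based route, you would need an argument that $\vec d\cdot\nu(G)\ge 0$ for \emph{every} supporting hyperplane of $Q$ other than that of $F_{Qi}$, without assuming it coincides with a facet plane of $P$; that essentially forces you back to a geometric containment argument like the paper's.
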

\begin{proof}
Assume, for the sake of contradiction, that $Q$ cannot be removed in direction $\vec{d}$ through $CH(F_i)$. This means that for some point $q \in Q$ one of the conditions of Observation~\ref{obs:valid-pair2}  does not hold.
$Q$ is convex, and therefore the open ray that emanates from $q$ in direction $\vec{d}$ cannot intersect two facets of $Q$.  This means that any point in $Q$ that fulfills the first condition must fulfill the second condition as well. Thus, the first condition must not hold for $q$. Let $v$ be a ray in direction $-\vec{d}$, and $L$ be the locus of all the points (not necessarily in $Q$) for which the first condition holds. $L=CH(F_i) \oplus v$~ (where $\oplus$ represents Minkowski sums \cite{DBLP:journals/comgeo/AgarwalFH02}).
Every point in $P$ satisfies the first condition of Observation~\ref{obs:valid-pair2}, therefore $P\subseteq L$ and since $L$ is convex, $Q\subseteq L$. Contradiction.
\end{proof}

A similar relationship between a polyhedron and its convex hull was shown for castability with a two-part mold~\cite{DBLP:journals/algorithmica/BoseBK97} and for feasibility of stereolithography~\cite{DBLP:journals/algorithmica/AsbergBBGOTWZ97}.

\section{Conclusion and Further Work}
We have described efficient solutions to determine the castability of a polyhedron with a single-part mold. Our algorithms are an order of magnitude faster than the best previously known algorithms for these problems.
We showed that our algorithms are optimal in the algebraic computation tree model.
\\\\
We outline several directions for further research: 

(i) Our focus here was on separability with a single translation. Can one devise efficient algorithms in case we are allowed to remove the polyhedron from its cast using arbitrary motion? 
This problem is already interesting for separating a planar polygon from its planar cast.
The latter can be resolved by considering the
motion planning problem for a robot (the polygon) among
obstacles (the mold). This approach however would yield a
near-quartic time solution~\cite{DBLP:journals/dcg/HalperinS96}. Thus, the goal here is to take advantage of the special structure of
the problem to obtain a more efficient solution.

(ii) There are many interesting problems when the mold is made of two or more parts (see, e.g.,~\cite{DBLP:journals/cad/AhnBBCHMS02}), and we hope that our novel observations here may open the door to more efficient algorithms for these more complicated casting problems.

\printbibliography

\end{document}